\definecolor{darkgreen}{RGB}{0,100,0}
\newcommand{\ket}[1]{| #1 \rangle}
\newcommand{\bra}[1]{\langle #1 |}
\newcommand{\braket}[2]{\langle #1 | #2 \rangle}
\newcommand{\cB}{\mathcal{B}}
\newcommand{\cH}{\mathcal{H}}
\newcommand\N{\mathbb{N}}
\newcommand\M{\mathcal{M}}
\newcommand{\ketbra}[2]{| #1 \rangle \langle #2 |}
\renewcommand{\phi}{\varphi}
\newcommand\C{\mathbb{C}}
\newcommand\R{\mathbb{R}}
\newcommand{\norm}[2]{\ensuremath{|\!|#1|\!|_{#2}}}
\def\beq{\begin{equation}}
\def\eeq{\end{equation}}
\def\bq{\begin{quote}}
\def\eq{\end{quote}}
\def\ben{\begin{enumerate}}
\def\een{\end{enumerate}}
\def\bit{\begin{itemize}}
\def\eit{\end{itemize}}
\def\ra{\rightarrow}
\def\lb{\left(}
\def\rb{\right)}
\def\lset{\lbrace}
\def\rset{\rbrace}
\def\r|{\right|}
\def\lbr{\left[}
\def\rbr{\right]}
\newtheorem{theorem}{Theorem}[section]
\newtheorem{lemma}[theorem]{Lemma}
\newtheorem{definition}[theorem]{Definition}
\theoremstyle{remark}
\newtheorem{example}[theorem]{Example}
\newtheorem{remark}[theorem]{Remark}
\begin{document}

\title{A refinement of Reznick's Positivstellensatz with applications to quantum information theory}

\author{Alexander M\"uller-Hermes}
\email{muellerh@posteo.net}
\affiliation{Department of Mathematical Sciences, University of Copenhagen, 2100 Copenhagen, Denmark \hfill\\ Institut Camille Jordan, Universit\'{e} Claude Bernard Lyon 1, 43 boulevard du 11 novembre 1918,\\ 69622 Villeurbanne cedex, France}
\author{Ion Nechita}
\email{ion.nechita@univ-tlse3.fr}
\affiliation{Laboratoire de Physique Th\'eorique, Universit\'e de Toulouse, CNRS, UPS, France}
\author{David Reeb}
\email{reeb.qit@gmail.com}
\affiliation{Institute for Theoretical Physics, Leibniz Universit\"at Hannover, 30167 Hannover, Germany \hfill\\ Bosch Center for Artificial Intelligence, Robert-Bosch-Campus 1, 71272 Renningen, Germany}
\maketitle

\begin{abstract}
In his solution of Hilbert's 17th problem Artin showed that any positive definite polynomial in several variables can be written as the quotient of two sums of squares. Later Reznick showed that the denominator in Artin's result can always be chosen as an $N$-th power of the squared norm of the variables and gave explicit bounds on $N$. By using concepts from quantum information theory (such as partial traces, optimal cloning maps, and an identity due to Chiribella) we give simpler proofs and minor improvements of both real and complex versions of this result. Moreover, we discuss constructions of Hilbert identities using Gaussian integrals and we review an elementary method to construct complex spherical designs. Finally, we apply our results to give improved bounds for exponential quantum de Finetti theorems in the real and in the complex setting.
\end{abstract}

\tableofcontents

\section{Introduction}
In the same way that the Nullstellensatz is fundamental for complex algebraic geometry, so called Positivstellens\"atze are important in real algebraic geometry \cite{bochnak2013real,marshall2008positive}. A Positivstellensatz \cite{krivine1964anneaux,stengle1974nullstellensatz} states that a polynomial in $d$ real variables which is non-negative on some subset of $\R^d$ is related in some prescribed way to a \emph{sum of squares} (SOS), which are special polynomials guaranteed by definition to be non-negative. Most of such results consider polynomials which are non-negative on \emph{semialgebraic sets} (sets where a finite number of polynomials are non-negative) and other need a (strict) positivity guarantee (e.g.~Schm\"udgen's \cite{schmuedgen1991positivstellensatz} and Putinar's \cite{putinar1993positive} Positivstellens\"atze). In this work, we shall focus on results close to Artin's solution to Hilbert's 17th problem \cite{artin1927zerlegung}: 

\medskip

\emph{For any strictly positive homogeneous polynomial $p$ in $d$ real variables, there exist two sum-of-squares polynomials $h,q$ such that $hp = q$.}

\medskip

In his seminal work \cite{reznick1995uniform}, Reznick showed that $h$ can be taken of the form $h(x) = \|x\|^{2N}$, giving also bounds on $N$, in terms of the number of variables, the degree, and a certain measure of positivity of $p$. We re-prove this type of results, both in the real \cite{reznick1995uniform} and in the complex \cite{to2006effective} cases, using techniques from quantum information theory. 

The tools from quantum information theory we employ are related mainly to the entanglement theory of symmetric, multi-partite quantum states. A great introduction to the main ideas and techniques we deploy is Harrow's preprint \cite{harrow2013church}. We also develop the parallel theory in the real case, which is less known than the complex variable case. Our main technical insight is an explicit inversion of a well-known identity due to Chiribella \cite{chiribella2010quantum} relating three sequences of quantum maps: the measure-and-prepare maps, the partial traces, and the approximate cloning maps. 

The main contribution of this work is to make precise the deep connection between Reznick-type Positivstellens\"atze and quantum information theory by recasting the classical proofs of the former in the linear algebraic language of the latter. As a byproduct of our careful analysis of this correspondence, we slightly improve the bounds on the exponents needed in the Positivstellens\"atze and in exponential de Finetti theorems, following \cite{harrow2013church}. 

When finishing our article we learned of the recent work by Fang and Fawzi \cite{fang2020sum} improving the convergence rates of sums-of-squares hierarchies by polynomial techniques related to Reznick's ideas (see also \cite{faybusovich2004global,doherty2012convergence} for other papers analyzing the speed of convergence of SDP hierarchies for polynomial optimization). While our work is also based on these ideas, our focus is quite different. Instead of estimating when a polynomial is a sum-of-squares we are interested in the particular form of the decomposition that is central in Reznick's work. However, it would be interesting to see if the techniques of Fang and Fawzi could also lead to new results in this direction. We shall keep this question for future study. 

Our paper is organized as follows. In Section \ref{sec:preliminaries} we introduce the correspondence between bi-Hermitian homogeneous multi-variable polynomials and Hermitian operators acting on the symmetric subspace of a tensor power, emphasizing the direct correspondence between analytical and algebraic operations. Sections \ref{sec:PSS-C} and \ref{sec:PSS-R} contain the proofs of the complex, resp.~real Positivstellens\"atze. In Section \ref{sec:exp-deF} we discuss exponential de Finetti theorems. The Appendices contain results on Hilbert identities and complex spherical designs used in the proofs. 

\section{Preliminaries}\label{sec:preliminaries}

In this section we set the stage for the proof of our main result, the complex Positivstellensatz in Theorem \ref{thm:PSS-C}. We do so by discussing the folklore connection between bi-Hermitian forms and Hermitian operators acting on the symmetric subspace of a tensor power of a finite dimensional complex Hilbert space. We then relate various linear algebraic operations on these operators to natural maps on the corresponding polynomials. We equally discuss the only purely analytical tool used in this paper to establish both the complex and the real Positivstellens\"atze, the Bernstein inequality in Lemmas \ref{lem:Bernstein-real} and \ref{lem:Bernstein}. 

We shall denote by $\vee^n\C^d$ the \emph{symmetric subspace} of the tensor product $(\C^d)^{\otimes n}$, and by $\cB(\vee^n\C^d)$ and $H(\vee^n\C^d)$ the spaces of (bounded) linear operators and Hermitian operators respectively from $\vee^n\C^d$ to itself. The space $\vee^n\C^d$ is spanned by the family $\{x^{\otimes n}\, : \, x \in \C^d \}$, see \cite[Section I.5]{bhatia1997matrix} or \cite[Theorem 3]{harrow2013church}. Importantly, we denote by $d[n]$ the dimension of the symmetric subspace:
$$d[n] := \dim \vee^n\C^d = \binom{d+n-1}{n}.$$

We shall use $\mathcal S^{d-1}$ to denote the complex unit sphere of $\C^d$, and $\mathfrak S_n$ to denote the permutation group on $n$ elements. We shall also use the falling factorial notation 
$$(x)_p := x(x-1) \cdots (x-p+1)$$
for real $x$ and integer $p \geq 1$. We use the bra-ket notation from quantum mechanics, denoting e.g.~by $\ketbra{x}{y}$ the rank-one linear operator $xy^*$.

\subsection{Polynomials and operators acting on the symmetric subspace}

For any Hermitian operator $W=W^*\in\mathcal H(\vee^{k_1}\C^{d_1}\otimes\ldots\otimes\vee^{k_l}\C^{d_l})$ we consider the corresponding bi-Hermitian form in the complex variables $x_1\in\C^{d_1}$, \ldots, $x_l\in\C^{d_l}$,
\[
p_W(x_1,\ldots,x_l) := \bra{x_1^{\otimes k_1}\otimes\ldots\otimes x_l^{\otimes k_l}}W\ket{x_1^{\otimes k_1}\otimes\ldots\otimes x_l^{\otimes k_l}}.
\]
The terminology "bi-Hermitian" used above refers to the fact that the form $p_W$, taking as input $l$ vectors, is of homogeneous degree $k_i$ in $x_i$ and $\overline{x_i}$; moreover, it determines the operator $W$ uniquely. Therefore, we shall often switch between the ``operator picture'' involving $W$ and the equivalent ``polynomial picture'' involving $p_W$. We introduce the following notation for the extremal values of $p_W$ on the unit sphere of each of the $l$ variable sets:
\begin{align*}
m(W) &:= \min_{\forall i:\|x_i\|_2 = 1}  p_W(x_1,\ldots,x_l), \\
M(W) &:= \max_{\forall i:\|x_i\|_2 = 1}  p_W(x_1,\ldots,x_l).
\end{align*}
An important special case we shall consider is the case $l=1$, in which we often write $x_1\equiv x$, $d_1\equiv d$, $k_1\equiv k$, and $p_W$ is called the \emph{Q function} \cite{husimi1940some}. In this case, $W$ is called \emph{block-positive} iff $m(W)\geq0$, that is if the corresponding polynomial has non-negative range. As a more general case we consider $l=2$ and $k_2=1$, i.e.~where the variables $x_2$ do not appear in the special tensor-product structure in $p_W$. In this case, we furthermore denote $x_2\equiv y$, and $d_2\equiv D$ (see Section \ref{sec:PSS-C}); of course, this reduces to the previous case upon setting $D=1$, $y=1$.

For $n\geq k$ we denote by $\operatorname{tr}_{n\ra k}:\cB\lb\vee^n \C^d\rb\ra \cB\lb\vee^k \C^d\rb$ the partial trace erasing $n-k$ systems. In the polynomial picture the partial trace reduces to a differential operator given in terms of the  Laplacian 
\begin{equation}\label{eq:complex-Laplacian}
\Delta = \sum_{i=1}^d \frac{\partial^2 p}{\partial \bar x_i \partial x_i},
\end{equation}
where we formally treat $x_i$ and $\bar{x}_i$ for $i\in \{ 1,\ldots ,d\}$ as independent variables. 

\begin{lemma}
For any $W \in \mathcal H(\vee^k \mathbb C^d)$, we have
\[
p_{\operatorname{tr}_{k \to k-t}(W)} = ((k)_t)^{-2} \Delta^t p_W.
\]
\label{lem:traceAsDiff}
\end{lemma}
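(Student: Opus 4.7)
The plan is to reduce to a spanning family of rank-one operators and check the identity by direct computation, exploiting that all three maps in sight ($W\mapsto p_W$, $W\mapsto\operatorname{tr}_{k\to k-t}(W)$, and the Laplacian on the polynomial side) are linear in $W$. Since $\vee^k\C^d$ is spanned by vectors of the form $z^{\otimes k}$, the operator space $\mathcal{B}(\vee^k\C^d)$ is spanned by $\{\ketbra{z^{\otimes k}}{w^{\otimes k}} : z,w\in\C^d\}$; it suffices to establish the identity for $W$ of this form and extend by linearity (Hermiticity of $W$ can be restored afterwards by taking symmetric combinations).

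For such a rank-one $W=\ketbra{z^{\otimes k}}{w^{\otimes k}}$, I would first compute the polynomial directly:
\[
p_W(x)=\braket{x^{\otimes k}}{z^{\otimes k}}\braket{w^{\otimes k}}{x^{\otimes k}}=\braket{x}{z}^{k}\braket{w}{x}^{k}.
\]
Next, using $\operatorname{tr}(\ketbra{z}{w})=\braket{w}{z}$ factor by factor, the partial trace satisfies $\operatorname{tr}_{k\to k-t}(W)=\braket{w}{z}^{t}\ketbra{z^{\otimes(k-t)}}{w^{\otimes(k-t)}}$, so
\[
p_{\operatorname{tr}_{k\to k-t}(W)}(x)=\braket{w}{z}^{t}\braket{x}{z}^{k-t}\braket{w}{x}^{k-t}.
\]

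For the right-hand side, I would treat $x_i$ and $\bar{x}_i$ as independent variables and use $\partial \braket{x}{z}/\partial \bar{x}_i=z_i$, $\partial \braket{w}{x}/\partial x_i=\bar{w}_i$, with the cross derivatives vanishing because $\braket{x}{z}$ is antiholomorphic in $x$ and $\braket{w}{x}$ is holomorphic in $x$. A single application of $\Delta$ then yields
\[
\Delta p_W(x)=\sum_{i=1}^{d} k\,z_i\,\braket{x}{z}^{k-1}\cdot k\,\bar{w}_i\,\braket{w}{x}^{k-1}=k^{2}\braket{w}{z}\braket{x}{z}^{k-1}\braket{w}{x}^{k-1}.
\]
Iterating this $t$ times (the derivatives still decouple since the first factor remains purely antiholomorphic and the second purely holomorphic) produces
\[
\Delta^{t} p_W(x)=\bigl((k)_t\bigr)^{2}\,\braket{w}{z}^{t}\braket{x}{z}^{k-t}\braket{w}{x}^{k-t},
\]
which divided by $((k)_t)^2$ matches the earlier expression for $p_{\operatorname{tr}_{k\to k-t}(W)}(x)$.

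I expect no serious obstacle here; the only point requiring care is the bookkeeping when iterating the Laplacian, to confirm that the holomorphic/antiholomorphic decoupling persists so that each application simply contributes a factor of $(k-s)^{2}\braket{w}{z}$ for $s=0,\ldots,t-1$, giving the falling factorial $(k)_t$ squared. Once this inductive calculation is in place, linearity in $W$ promotes the identity from the rank-one spanning family to all of $\mathcal{H}(\vee^k\C^d)$, completing the proof.
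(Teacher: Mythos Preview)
Your proposal is correct and follows essentially the same approach as the paper: reduce by linearity to a spanning family of rank-one operators, compute both sides explicitly, and iterate. The only cosmetic difference is that the paper uses the Hermitian spanning set $\{\ketbra{v}{v}^{\otimes k}:v\in\C^d\}$ for $\mathcal{H}(\vee^k\C^d)$ (so $z=w$ in your notation), whereas you work with the slightly larger family $\{\ketbra{z^{\otimes k}}{w^{\otimes k}}\}$ spanning all of $\mathcal{B}(\vee^k\C^d)$; the computations and the iteration step are otherwise identical.
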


\begin{proof}
Recall that the set $\lset \ketbra{v}{v}^{\otimes k}: \ket{v}\in \C^d\rset$ spans $\mathcal H\lb\vee^{k}\C^d\rb$ (see \cite[Eq. 11b]{harrow2013church}). Therefore, it will be sufficient to show the lemma for the corresponding set of bi-Hermitian forms $\lset p_{\ketbra{v}{v}^{\otimes k}}(x) = |\braket{x}{v}|^{2k}\rset$. Note that on one hand
\begin{align*}
\Delta p_{\ketbra{v}{v}^{\otimes k}}(x) &= \Delta \lbr(v_1\bar{x}_1 + \cdots +v_d\bar{x}_d)^{k}(\bar{v}_1x_1 + \cdots +\bar{v}_dx_d)^{k}\rbr\\
 &= k^2\norm{v}{}^2|\braket{x}{v}|^{2k-2} = k^2\norm{v}{}^2 p_{\ketbra{v}{v}^{\otimes (k-2)}}(x).
\end{align*}
On the other hand 
\[
\text{tr}_{k\ra (k-1)}(\ketbra{v}{v}^{\otimes k}) = \norm{v}{}^2\ketbra{v}{v}^{\otimes (k-2)}.
\]
Direct comparison of the two expressions shows that
\[
k^2\text{tr}_{k\ra (k-1)}(p) = \Delta p .
\] 
Finally, by iterating the previous formula the statement of the lemma follows. 
\end{proof}

\begin{figure}
	\centering
\begin{minipage}{6in}
	\centering
	\raisebox{-0.5\height}{\includegraphics[width=0.3\linewidth]{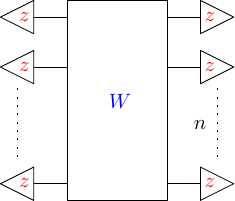}}
	\hspace*{.2in}
	\raisebox{-0.5\height}{\includegraphics[width=0.3\linewidth]{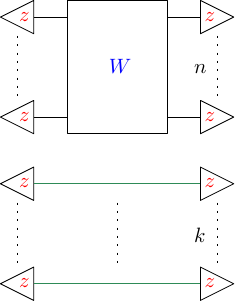}}
	\hspace*{.2in}
	\raisebox{-0.5\height}{\includegraphics[width=0.3\linewidth]{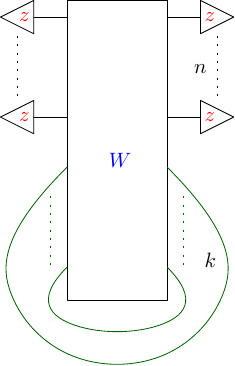}}	
\end{minipage}	
	\caption{Graphical representation of the correspondence between self-adjoint operators $\textcolor{blue}W$ acting on the symmetric subspace, and polynomials. From left to right, we have depicted the diagrams for $p_{\textcolor{blue}{W}}(\textcolor{red}{z})$, $\textcolor{darkgreen}{\|\textcolor{red}{z}\|^{2k}}p_{\textcolor{blue}{W}}(\textcolor{red}{z})$, and $\textcolor{darkgreen}{((n+k)_k)^{-2} \Delta^k} p_{\textcolor{blue}{W}}(\textcolor{red}{z})$ respectively, where $\Delta$ is the complex Laplacian \eqref{eq:complex-Laplacian}. This emphasized in particular that multiplying with the norm and the iterated complex Laplacian are, up to constants, dual operations.}
	\label{fig:diagrams-polynomials-complex}
\end{figure}

\subsection{Spherical designs}

In order to have discrete versions of our main result, the complex Positivstellensatz from Theorem \ref{thm:PSS-C}, we need the following relaxation of the uniform measure on the complex unit sphere. The real case has a long history in mathematics and computer science \cite{delsarte1991spherical}, while the complex case has received a lot of interest due to applications in quantum information theory \cite{scott2006tight}.

\begin{definition}\label{definition-t-design}
For any $n,d\in\mathbb N$, a \emph{complex spherical $n$-design} (in $d$ dimensions) is a measure $\mathrm d\phi$ on the complex unit sphere ${\mathcal S}^{d-1}:=\{\phi\in{\mathbb C}^d:|\phi|=1\}$ such that
\begin{equation*}
d[n]\int_{\mathcal S^{d-1}} |\phi\rangle\langle\phi|^{\otimes n} \mathrm d\phi=P_{sym}^{(n,d)},
\end{equation*}
where $P_{sym}^{(n,d)}$ is the orthogonal projection on the symmetric subspace $\vee^n\mathbb C^d\subseteq(\mathbb C^d)^{\otimes n}$.
\end{definition}

Integration with respect to a spherical $n$-design over a polynomial of degree at most $n$ in $\phi \in \C^d$ and degree at most $n$ in $\overline{\phi}$ therefore yields the same result as integration with respect to the Haar measure (the unique unitarily invariant probability measure on ${\mathcal S}^{d-1}$), which is a spherical $n$-design for any $n\in\mathbb N$. But whereas the Haar measure is non-atomic, there exist, for any $n<\infty$, \emph{discrete} spherical $n$-designs supported on a finite number of points, so that integrals turn into finite (weighted) sums; in Appendix \ref{spherical-design-appendix} we show how to construct a complex spherical $n$-design supported on $(n+1)^{2d}$ points.

\subsection{The measure-and-prepare map}

The term \emph{measure-and-prepare map} comes from quantum information theory, where linear maps of a similar form are known as \emph{quantum-classical channels}, see \cite[Sec.~4.6.6]{wilde2017quantum}. Physically, they can be seen as processes where the input is measured in some (possibly over-complete) basis, and then a specific output is prepared (cf. \eqref{equ:MP}). 

\begin{definition}
For $n,k\in\N$, the \emph{measure-and-prepare map} $\operatorname{MP}_{n\to k}:\cB(\vee^n \C^d)\to \cB(\vee^k \C^d)$ is defined as 
\begin{equation}\label{equ:MP}
\operatorname{MP}_{n\to k}(X) := d[n+k]\int \bra{\phi^{\otimes n}} X\ket{\phi^{\otimes n}} \ketbra{\phi}{\phi}^{\otimes k} \mathrm{d}\phi,
\end{equation}
for any $X\in\cB(\vee^n \C^d )$. Here $d[n+k]$ denotes the dimension of the symmetric subspace $\vee^{n+k}\C^d$ and $\mathrm{d}\phi$ denotes the Haar measure on the unit sphere in $\mathbb C^d$ (or any spherical $(n+k)$-design, see Definition \ref{definition-t-design}). 
\end{definition}

Note that the measure-and-prepare map is completely positive, but in general it is neither trace-preserving nor unital. To make it trace-preserving one has to multiply with the scalar $d[n]/d[n+k]$. For any $n,k\in\N$ the measure-and-prepare map has the adjoint $\operatorname{MP}^*_{n\to k} = \operatorname{MP}_{k\to n}$ with respect to the Hilbert-Schmidt inner product. 

The adjoint of the partial trace with respect to the Hilbert-Schmidt inner product on $\cB(\vee^n \C^d )$ is given by 
\begin{equation}\label{eq:adjoint-of-trace}
\operatorname{tr}^{*}_{k\to n}(X) = (\operatorname{tr}_{n\to k})^{*}(X) = P^{(n,d)}_{sym} ( X\otimes I^{\otimes (n-k)}_{d}) P^{(n,d)}_{sym}.
\end{equation}
On the level of polynomials we have 
\begin{equation}\label{eq:adjoint-of-trace-poly-language}
p_{\operatorname{tr}^{*}_{k\to n}(X)}(x) = \langle x^{\otimes n} | \operatorname{tr}^{*}_{k\to n}(X) | x^{\otimes n} \rangle = \langle x^{\otimes n} | X \otimes I_d^{\otimes (n-k)} | x^{\otimes n} \rangle = \|x\|^{2(n-k)} p_X(x).
\end{equation}
The adjoint of the partial trace map is equal, up to a factor, to the so-called ``cloning channel'' 
$$\operatorname{Clone}_{k\to n}:=\frac{d[k]}{d[n]}\operatorname{tr}^*_{k\to n},$$
which is the best quantum-channel approximation to a quantum cloner, mapping $k$ copies of a quantum state to $n$ (approximate) copies, see \cite{keyl1999optimal}.

The measurement-and-prepare map satisfies the following remarkable identity involving partial traces and their adjoints, due to Chiribella \cite[Eq.~(6)]{chiribella2010quantum} (see also \cite[Theorem 7]{harrow2013church}):

\begin{theorem}[Chiribella identity~\cite{chiribella2010quantum}]\label{thm:Chiribella}
For any $n,k\in \N$ we have 
\begin{align}
\label{eq:Chiribella-formula-C}
\operatorname{MP}_{n\to k} &= \sum^{\min(n,k)}_{s=0} c(n,k,s) \operatorname{tr}^{*}_{s\to k}\circ \operatorname{tr}_{n\to s}\\
\nonumber &= \sum^{\min(n,k)}_{s=0} c(n,k,s) \frac{d[k]}{d[s]} \operatorname{Clone}_{s\to k}\circ \operatorname{tr}_{n\to s},
\end{align}
where 
$$c(n,k,s) = \frac{\binom{k}{s}\binom{n}{s}}{\binom{k+n}{k}}.$$
Note that $c(n,k,s)=c(k,n,s)$ and $\sum^{\min(n,k)}_{s=0} c(n,k,s) = 1$.
\label{thm:chiribella}
\end{theorem}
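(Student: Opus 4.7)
The plan is to verify the identity on the spanning family $\{|v\rangle\langle v|^{\otimes n}:v\in\C^d\}$ of $\cB(\vee^n\C^d)$ by reducing each side to a scalar polynomial in a test vector $y\in\C^d$. By linearity in $X$, and since the correspondence $W\mapsto p_W$ is injective, it suffices to evaluate the diagonal matrix element $\langle y^{\otimes k}|\cdot|y^{\otimes k}\rangle$ of both sides at $X=|v\rangle\langle v|^{\otimes n}$ for arbitrary $v,y\in\C^d$.

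The right-hand side is immediate: $\operatorname{tr}_{n\to s}(|v\rangle\langle v|^{\otimes n})=\|v\|^{2(n-s)}|v\rangle\langle v|^{\otimes s}$ by the product structure, and Equation~\eqref{eq:adjoint-of-trace-poly-language} produces a factor $\|y\|^{2(k-s)}$ when applying $\operatorname{tr}^*_{s\to k}$ at the polynomial level. Hence the $s$-th summand contributes $c(n,k,s)\,\|v\|^{2(n-s)}\|y\|^{2(k-s)}|\langle v|y\rangle|^{2s}$.

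For the left-hand side, substituting into \eqref{equ:MP} and pairing with $|y^{\otimes k}\rangle$ gives $d[n+k]\int|\langle v|\phi\rangle|^{2n}|\langle y|\phi\rangle|^{2k}\,\mathrm d\phi$. By Definition~\ref{definition-t-design} this equals
\[
\langle v^{\otimes n}\otimes y^{\otimes k}|P_{sym}^{(n+k,d)}|v^{\otimes n}\otimes y^{\otimes k}\rangle.
\]
Expanding $P_{sym}^{(n+k,d)}=\frac{1}{(n+k)!}\sum_{\pi\in\mathfrak S_{n+k}}U_\pi$, each permutation contributes $\|v\|^{2(n-s)}\|y\|^{2(k-s)}|\langle v|y\rangle|^{2s}$, with $s:=|\{i\in\{n+1,\ldots,n+k\}:\pi^{-1}(i)\le n\}|$ recording the number of ``cross-block'' indices.

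The crux is the combinatorial count: the number of $\pi\in\mathfrak S_{n+k}$ with a prescribed value of $s$ equals $\binom{n}{s}\binom{k}{s}n!\,k!$, obtained by choosing the $s$ positions of the first block whose image lies in the second block, the $s$ positions of the second block receiving from the first, and independent bijections on the two remaining sub-blocks. Dividing by $(n+k)!$ then produces exactly $c(n,k,s)=\binom{n}{s}\binom{k}{s}/\binom{n+k}{n}$, matching the right-hand side term by term; everything else is routine bookkeeping.
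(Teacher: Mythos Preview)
Your proof is correct and follows essentially the same approach as the paper: both evaluate the two sides on rank-one inputs $|a\rangle\langle a|^{\otimes n}$ paired against $|b\rangle\langle b|^{\otimes k}$, reduce the left side to $\langle a^{\otimes n}\otimes b^{\otimes k}|P_{sym}^{(n+k,d)}|a^{\otimes n}\otimes b^{\otimes k}\rangle$ via the design property, expand the symmetric projector over $\mathfrak S_{n+k}$, and count the permutations with a given cross-block parameter $s$ as $\binom{n}{s}\binom{k}{s}\,n!\,k!$. The only item you omit is the normalization $\sum_s c(n,k,s)=1$, which the paper dispatches via the Vandermonde identity.
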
 
For the sake of completeness we give the proof of the Chiribella identity presented in \cite[Theorem 7]{harrow2013church}.

\begin{proof}
 For any $a,b\in \C^d$ we have (the integral is, as usual, on the unit sphere of $\C^d$, and $\mathrm{d}\phi$ is a $(n+k)$-spherical design):
\begin{align*}
\bra{b^{\otimes k}}\operatorname{MP}_{n\to k}(\ketbra{a}{a}^{\otimes n})\ket{b^{\otimes k}} &= d[n+k]\int \braket{b^{\otimes k}\otimes a^{\otimes n}}{\phi^{\otimes (n+k)}} \braket{\phi^{\otimes (n+k)}}{b^{\otimes k}\otimes a^{\otimes n}}\text{d}\phi\\
&= \bra{b^{\otimes k}\otimes a^{\otimes n}}P^{(n+k,d)}_{sym} \ket{b^{\otimes k}\otimes a^{\otimes n}} \\
&= \frac{1}{(n+k)!}\sum_{\sigma\in \mathfrak S_{n+k}} \bra{b^{\otimes k}\otimes a^{\otimes n}}P_{\sigma} \ket{b^{\otimes k}\otimes a^{\otimes n}} \\
&= \sum^{\min(n,k)}_{s=0} c(n,k,s)\|a\|^{2(n-s)}\|b\|^{2(n-k)}|\braket{a}{b}|^{2s}\\
&= \left\langle b^{\otimes k} \left | \sum^{\min(n,k)}_{s=0} c(n,k,s) \operatorname{tr}^*_{s\to k}\circ \operatorname{tr}_{n\to s}(\ketbra{a}{a}^{\otimes n}) \right| b^{\otimes k} \right\rangle
\end{align*}
Above, we have used Lemma \ref{lem:SphHilbCompLA} for the second equality and the definition of the projector onto the symmetric subspace as a sum of tensor-permutation matrices for the third equality. To see the fourth equality, note that, among the permutations $\sigma\in \mathfrak S_{n+k}$, precisely 
\[
n!~k!\binom{k}{s}\binom{n}{s}= (n+k)!c(n,k,s)
\]
of them yield $\bra{b^{\otimes k}\otimes a^{\otimes n}}P_{\sigma} \ket{b^{\otimes k}\otimes a^{\otimes n}} = \|a\|^{2(n-s)}\|b\|^{2(n-k)}|\braket{a}{b}|^{2s}$. The theorem then follows from the fact that the set $\{ x^{\otimes n} \, : \, x \in \C^d\}$ spans $\vee^{n}\C^d$.

Finally, the normalization condition $\sum^{\min(n,k)}_{s=0} c(n,k,s) = 1$ is the well-known \emph{Vandermonde identity} \cite[Eq.~(5.22)]{graham1989concrete} given by
$$\sum^{\min(n,k)}_{s=0} \binom{k}{s}\binom{n}{s} =  \sum^{\min(n,k)}_{s=0} \binom{k}{k-s}\binom{n}{s} = \binom{k+n}{k}.$$
\end{proof}

Let us point out that Theorem \ref{thm:Chiribella} will play a central role in our approach to proving real and complex Positivstellens\"atze; the corresponding step in the original proofs from \cite{reznick1995uniform} and, respectively, \cite{to2006effective}, is played by \emph{Hobson's identity} \cite{hobson1931theory}.

\subsection{Bernstein inequalities}

The last ingredient we need is a Bernstein-type inequality, relating the supremum of the Laplacian of some homogeneous polynomial to the supremum of the polynomial itself. Let us first recall the result in the real case (and, for convenience of our notation, only for polynomials of even degree $2k$).

\begin{lemma}[Bernstein-type inequality, real case;  \cite{reznick1995uniform}]
For any $W\in \cH(\vee^k \R^d)$ we have
\[
\Big{|}(\Delta_\R^{t}p_W)(x)\Big{|}\leq d^t (2k)_{2t} M(W)
\]
whenever $\|x\|_2 = 1$, where $p_W(x): = \bra{x^{\otimes k}}W\ket{x^{\otimes k}}$ is a homogeneous polynomial of degree $2k$ in the real variables $x\in\mathbb R^d$, and $\Delta_\R$ denotes the Laplacian with respect to these $d$ real variables.
\label{lem:Bernstein-real}
\end{lemma}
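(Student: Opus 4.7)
My plan is to proceed by induction on $t\geq 1$, reducing the general claim to the base case $t=1$, which is the classical Bernstein-type inequality for homogeneous forms on the sphere. The inductive step is essentially formal: because $\Delta_\R^{t-1}p_W$ is itself a homogeneous polynomial of degree $2(k-t+1)$ in $d$ real variables, applying the $t=1$ case to it (with $k$ replaced by $k-t+1$) together with the inductive hypothesis gives
\[
|(\Delta_\R^{t}p_W)(x)|\leq d\cdot(2k-2t+2)(2k-2t+1)\cdot d^{t-1}(2k)_{2(t-1)}\,M(W)=d^{t}(2k)_{2t}\,M(W),
\]
using the telescoping identity $(2k)_{2t-2}\cdot(2k-2t+2)(2k-2t+1)=(2k)_{2t}$ on falling factorials.

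The entire content therefore lies in the base case: for every homogeneous form $q$ of degree $2\ell$ in $d$ real variables bounded in absolute value by $M$ on $S^{d-1}$, one needs $|\Delta_\R q(x)|\leq d(2\ell)(2\ell-1)M$ for $x\in S^{d-1}$. My approach is to slice: fix $x\in S^{d-1}$, and for each $i$ decompose $e_i=x_i\,x+\sqrt{1-x_i^{2}}\,u_i$ with $u_i\in x^{\perp}\cap S^{d-1}$, so that
\[
\partial_{x_i}^{2}q(x)=x_i^{2}\,x^{T}H(x)x+2x_i\sqrt{1-x_i^{2}}\,x^{T}H(x)u_i+(1-x_i^{2})\,u_i^{T}H(x)u_i,
\]
where $H=\nabla^{2}q$. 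The first two terms are resolved by Euler's identities $x^{T}H(x)x=2\ell(2\ell-1)q(x)$ and $H(x)x=(2\ell-1)\nabla q(x)$, while the transverse term $u_i^{T}H(x)u_i$ is controlled by applying the univariate Bernstein inequality to the trigonometric polynomial $\theta\mapsto q(\cos\theta\,x+\sin\theta\,u_i)$ of degree $\leq 2\ell$, which stays on $S^{d-1}$ and hence has sup-norm $\leq M$.

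The delicate point is recovering the sharp constant $d(2\ell)(2\ell-1)$: bounding each contribution separately by the triangle inequality and summing over $i$ produces instead something of size $\sim d(2\ell)(2\ell+1)M$, because cancellations between the longitudinal, cross, and transverse contributions are lost. Closing this gap is the main obstacle, and I expect the cleanest route is to follow Reznick's original argument via the Fischer decomposition $q=\sum_{j\geq 0}\|x\|^{2j}h_{\ell-j}$ into harmonic components $h_{\ell-j}$: on $S^{d-1}$ the Euclidean Laplacian acts diagonally on this decomposition, and the sharp constant emerges cleanly from Euler-type identities applied term by term together with classical sup-norm estimates for spherical harmonic projections. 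A minor caveat: the stated bound implicitly requires $M(W)=\max_{\|x\|_2=1}|p_W(x)|$, which holds automatically when $p_W\geq 0$ -- the case relevant for the Positivstellensatz applications downstream.
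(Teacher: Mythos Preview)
The paper does not give its own proof of this lemma; it simply cites Reznick \cite{reznick1995uniform} and uses the inequality as a black box. So there is no in-paper argument to compare your proposal against.

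On the substance of your proposal: the inductive reduction from general $t$ to $t=1$ is correct and routine. The entire content, as you correctly identify, is the base case, and here your argument does not close. Your slicing computation via the Hessian is fine, but --- as you yourself acknowledge --- bounding the longitudinal, cross, and transverse contributions separately and summing over $i$ produces a constant of order $d(2\ell)(2\ell+1)$, not the required $d(2\ell)(2\ell-1)$. This is not a cosmetic loss: the exact falling-factorial constant $(2k)_{2t}$ is what makes the telescoping in the induction work and is what is used verbatim in the downstream estimates (e.g.\ in \eqref{equ:PsiBound} and \eqref{eq:PsiBound-R}). Having noted the obstacle, you then say you would ``follow Reznick's original argument via the Fischer decomposition'' --- but that is precisely what the paper already does by citation. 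So as written, your proposal is not a self-contained proof; it is a correct reduction plus a pointer back to the reference, which leaves you in the same position as the paper.

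Your closing caveat is correct and worth making explicit: the bound as stated, with $M(W)=\max_{\|x\|=1}p_W(x)$ rather than $\max|p_W|$, is only valid when $p_W\geq 0$ on the sphere. This hypothesis is satisfied in every application in the paper (where $m(W)>0$ is assumed), but the lemma taken in isolation is slightly overstated.
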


We shall need later the following complex version of this result. 

\begin{lemma}[Bernstein-type inequality, complex case]
For any $W\in \cH(\vee^k \C^d)$ we have
\begin{equation}\label{eq:Bernstein-complex}
\Big{|}(\Delta^{t}p_W)(x)\Big{|}\leq  (d/2)^t (2k)_{2t} M(W)
\end{equation}
whenever $\|x\|_2 = 1$.
\label{lem:Bernstein}
\end{lemma}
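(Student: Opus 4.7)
The plan is to reduce the complex inequality directly to its real counterpart in Lemma \ref{lem:Bernstein-real}, by expressing the complex Laplacian as a scalar multiple of a real Laplacian in $2d$ real variables and viewing $p_W$ as a real polynomial on $\R^{2d}$. First, I would write $x_j = u_j + i v_j$ with $(u,v) \in \R^{2d}$ and use the Wirtinger identities
\[
\frac{\partial}{\partial x_j} = \frac12\left(\frac{\partial}{\partial u_j} - i\frac{\partial}{\partial v_j}\right), \qquad \frac{\partial}{\partial \bar x_j} = \frac12\left(\frac{\partial}{\partial u_j} + i\frac{\partial}{\partial v_j}\right)
\]
to compute $\partial^2/(\partial \bar x_j\, \partial x_j) = \tfrac14\bigl(\partial^2/\partial u_j^2 + \partial^2/\partial v_j^2\bigr)$. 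Summing over $j$ gives $\Delta = \tfrac14 \Delta_\R$, where $\Delta_\R$ is the Laplacian in the $2d$ real variables, and hence $\Delta^t = 4^{-t}\Delta_\R^t$.

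Second, I would realize $p_W$ as a homogeneous real polynomial of degree $2k$ on $\R^{2d}$. Since $W$ is Hermitian, $p_W$ is real-valued, and since it is bi-homogeneous of degree $(k,k)$ in $(x,\bar x)$, substituting $x = u+iv$ produces a homogeneous real polynomial of total degree $2k$ in $(u,v) \in \R^{2d}$. The same spanning argument used in the proof of Lemma \ref{lem:traceAsDiff}, but run over $\R$ rather than $\C$, shows that every such polynomial has the form $p_{W'}$ for some $W' \in \cH(\vee^k \R^{2d})$; pick such a $W'$ with $p_{W'}(u,v) = p_W(u+iv)$. Because $(u,v) \mapsto u+iv$ is an $\R$-linear isometry from $\R^{2d}$ onto $\C^d$ carrying the real unit sphere onto the complex unit sphere, one has $M(W') = M(W)$.

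Finally, I would apply the real Bernstein inequality (Lemma \ref{lem:Bernstein-real}) to $W' \in \cH(\vee^k\R^{2d})$, obtaining for every $\|(u,v)\|_2 = 1$,
\[
|\Delta_\R^t p_W(x)| = |\Delta_\R^t p_{W'}(u,v)| \leq (2d)^t (2k)_{2t}\, M(W') = (2d)^t (2k)_{2t}\, M(W),
\]
and combining with the first step,
\[
|\Delta^t p_W(x)| = 4^{-t}\,|\Delta_\R^t p_W(x)| \leq \frac{(2d)^t}{4^t}(2k)_{2t}\, M(W) = (d/2)^t (2k)_{2t}\, M(W),
\]
which is the claimed bound. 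The only step that is not pure change-of-variables bookkeeping is the realization of $p_W$ as $p_{W'}$ in step two, and this is immediate from the spanning property the paper already uses; the improved factor $(d/2)^t$ instead of $d^t$ is precisely the arithmetic $2d/4 = d/2$ coming from the Wirtinger calculation.
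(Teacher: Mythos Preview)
Your proof is correct and follows essentially the same route as the paper: both reduce to the real Bernstein inequality (Lemma~\ref{lem:Bernstein-real}) via the identity $\Delta_\C = \tfrac14 \Delta_\R$ obtained from the Wirtinger calculus, then view $p_W$ as a degree-$2k$ homogeneous polynomial in $2d$ real variables. Your explicit construction of $W'\in\cH(\vee^k\R^{2d})$ with $M(W')=M(W)$ is slightly more careful than the paper's one-line ``we see it as a homogeneous polynomial of degree $2k$ in $2d$ real variables,'' but the underlying argument is identical.
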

\begin{proof}
The proof is based on a reduction to the real case, and the fact that the ``complex Laplacian'' can be expressed in terms of a real Laplacian (depending on real and imaginary parts), as follows. Consider a polynomial $q = q(z,\bar z) = \sum_{s,t \geq 0} q_{st} z^s \bar z^t$. Its (complex) Laplacian reads 
$$\Delta q\equiv\Delta_\C q = \sum_{s,t>0} st q_{st} z^{s-1}\bar z^{t-1}.$$
Writing now $z=a+ib$, with $a,b \in \R$, and taking the ``real Laplacian'' of $q$ with respect to $a,b$, we have
$$\Delta_\R q  := \left( \frac{\partial^2 }{\partial a^2} + \frac{\partial^2 }{\partial b^2}\right) q.$$
Taking the partial derivatives in $q(a+ib, a-ib)$, we obtain
$$\Delta_\C q = \frac 1 4 \Delta_\R q,$$
a relation which extends trivially to several complex variables. Going back to our complex polynomial $p$, we see it as a homogeneous polynomial of degree $2k$ in $2d$ real variables. Applying Lemma \ref{lem:Bernstein-real}, we obtain
$$\Big{|}(\Delta^{t}p_W)(y)\Big{|}\leq 4^{-t} (2d)^t (2k)_{2t} M(W).$$
\end{proof}

It would be interesting to obtain a tighter Bernstein inequality in the complex case, without using the real Bernstein inequality.

\section{A Positivstellensatz for complex Hermitian bi-homogeneous polynomials}\label{sec:PSS-C}

The following theorem is the main contribution of our paper.

\begin{theorem}\label{thm:PSS-C}
For some $k,d,D\in\N$ consider a Hermitian operator $W\in \mathcal{H}(\vee^k \C^d \otimes \C^D)$ with $m(W) >0$. Then, for any positive integer $n \geq k$ such that
\begin{equation}\label{eq:PSS-bound-n-other}
n \geq \frac{dk(2k-1)}{\ln \left( 1 + \frac{m(W)}{M(W)} \right)} -d-k+1,
\end{equation}
we have, for all $x \in \C^d$ and $y \in \C^D$,
\begin{equation}\label{eq:sos-integral-repn}
\|x\|^{2(n-k)}p_W(x,y) = \int_{\mathcal S^{d-1}}  p_{\tilde W}(\phi,y) |\braket{\phi}{x}|^{2n} \mathrm d\phi
\end{equation}
where $p_{\tilde W}(\phi,y)$ is a bi-homogeneous Hermitian form of bi-degree $k$ in $\phi, \bar \phi$ and bi-degree $1$ in $y, \bar y$, satisfying $p_{\tilde W}(\phi,y) \geq 0$ for all $\phi \in \C^d$ and $y \in \C^D$, and \emph{explicitly computable} in terms of $W$. Here, the measure $\mathrm d\phi$ can be any $(n+k)$-design (see Definition \ref{definition-t-design}) showing that $\|x\|^{2(n-k)}p_W(x,y)$ is a \emph{sum of squares}. In the case $k=1$, the bound \eqref{eq:PSS-bound-n-other} can be improved to
\begin{equation}\label{eq:PSS-bound-n-k1}
n \geq d \frac{M(W)}{m(W)} -d.
\end{equation}
\end{theorem}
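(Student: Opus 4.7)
The plan is to invert Chiribella's identity and verify positivity via the complex Bernstein inequality. By \eqref{eq:adjoint-of-trace-poly-language}, the left-hand side of \eqref{eq:sos-integral-repn} equals $p_{\operatorname{tr}^{*}_{k\to n}(W)}(x,y)$, while unfolding the definition of $\operatorname{MP}_{k\to n}$ (acting on the $\vee^{k}\C^d$ factor only) identifies the right-hand side with $d[n+k]^{-1}\,p_{\operatorname{MP}_{k\to n}(\tilde W)}(x,y)$. Thus the theorem reduces to constructing a Hermitian $\tilde W\in\mathcal H(\vee^{k}\C^d\otimes\C^D)$ with $p_{\tilde W}\geq 0$ satisfying
\[
\operatorname{MP}_{k\to n}(\tilde W)=d[n+k]\,\operatorname{tr}^{*}_{k\to n}(W).
\]

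Using the composition rule $\operatorname{tr}^{*}_{s\to n}=\operatorname{tr}^{*}_{k\to n}\circ\operatorname{tr}^{*}_{s\to k}$ for $s\leq k\leq n$, the Chiribella identity (Theorem~\ref{thm:Chiribella}) factors as $\operatorname{MP}_{k\to n}=\operatorname{tr}^{*}_{k\to n}\circ T_n$ with
\[
T_n:=\sum_{s=0}^{k}c(k,n,s)\,\operatorname{tr}^{*}_{s\to k}\circ\operatorname{tr}_{k\to s}.
\]
Since $\operatorname{tr}^{*}_{k\to n}$ is injective (it multiplies polynomials by $\|x\|^{2(n-k)}$), the displayed equation collapses to the finite-dimensional linear system $T_n(\tilde W)=d[n+k]\,W$. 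The $s=k$ term contributes $c(k,n,k)\,\mathrm{id}$, and the remaining coefficients satisfy $c(k,n,s)/c(k,n,k)=\Theta(n^{s-k})$, so for large $n$ the map $T_n$ is invertible and $\tilde W$ is given by an explicit, finite expression involving iterated partial traces of $W$ re-symmetrized by adjoint traces---this is the ``explicitly computable'' form of $\tilde W$ claimed in the theorem.

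The main obstacle is to verify $p_{\tilde W}(\phi,y)\geq 0$ for all unit $\phi\in\C^d$, $y\in\C^D$. In the polynomial picture, Lemma~\ref{lem:traceAsDiff} combined with \eqref{eq:adjoint-of-trace-poly-language} recasts $T_n(\tilde W)=d[n+k]\,W$ as
\[
\sum_{t=0}^{k}\frac{c(k,n,k-t)}{((k)_{t})^{2}}\,\|x\|^{2t}(\Delta^{t}p_{\tilde W})(x,y)=d[n+k]\,p_{W}(x,y).
\]
Restricting to $\|x\|=1$ isolates $p_{\tilde W}(x,y)$ as the principal term $[d[n+k]/c(k,n,k)]\,p_{W}(x,y)$ minus explicit corrections involving $\Delta^{t}p_{\tilde W}$ for $t\geq 1$. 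I would bound each correction uniformly in $y$ on the unit sphere by applying the complex Bernstein inequality (Lemma~\ref{lem:Bernstein}) to the partial form $p_{\tilde W}(\,\cdot\,,y)\in\mathcal H(\vee^k\C^d)$, whose maximum is dominated by $M(\tilde W)$. Because $c(k,n,k-t)/c(k,n,k)=\Theta(n^{-t})$ while the Bernstein constants grow only polynomially in $t$, the summed correction has magnitude $\epsilon_n\,M(\tilde W)$ with $\epsilon_n\to 0$. Self-bounding $M(\tilde W)$ in terms of $M(W)$ through the same relation and requiring the net lower bound $p_{\tilde W}\geq 0$ then yields, after tracking constants and invoking $\ln(1+x)\leq x$ to linearize a geometric-type estimate, the threshold \eqref{eq:PSS-bound-n-other}.

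For the sharper $k=1$ bound \eqref{eq:PSS-bound-n-k1}, the sum defining $T_n$ has only two summands, so $T_n(\tilde W)=d[n+1]\,W$ can be solved in closed form: taking the partial trace over $\C^d$ first determines $\operatorname{tr}_{\C^d}(\tilde W)$, and back-substitution gives
\[
\tilde W=\frac{(n+1)\binom{d+n}{n+1}}{n(n+d)}\Bigl[(n+d)\,W-I_d\otimes\operatorname{tr}_{\C^d}(W)\Bigr].
\]
Block-positivity $p_{\tilde W}(\phi,y)\geq 0$ then amounts to $(n+d)\,p_{W}(\phi,y)\geq\bra{y}\operatorname{tr}_{\C^d}(W)\ket{y}$ for unit $\phi,y$, and combining $p_{W}(\phi,y)\geq m(W)$ with $\bra{y}\operatorname{tr}_{\C^d}(W)\ket{y}=\sum_{i}p_{W}(e_i,y)\leq d\,M(W)$ yields \eqref{eq:PSS-bound-n-k1}.
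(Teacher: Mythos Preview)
Your overall strategy---invert Chiribella's identity to produce $\tilde W$, then verify positivity via the complex Bernstein inequality---matches the paper, and your treatment of the $k=1$ case is correct and essentially identical to the paper's.

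For general $k$, however, there is a genuine gap in how you apply Bernstein. You propose to bound $|\Delta^{t}p_{\tilde W}|$ by $M(\tilde W)$ and then ``self-bound $M(\tilde W)$ in terms of $M(W)$ through the same relation''. This creates a bootstrap loop: you need $n$ large to control $M(\tilde W)$, but you need $M(\tilde W)$ controlled to determine how large $n$ must be. Even if one untangles this, the resulting inequality has the form $S_n/c(k,n,k)\leq m(W)/(m(W)+M(W))$ rather than the paper's $\tilde S_n/q(n,k,k)\leq m(W)/M(W)$, and the ratio of consecutive terms in your sum is governed by $dk(2k-1)/(n-k+1)$ rather than $dk(2k-1)/(n+k+d-1)$. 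These differences mean you will \emph{not} land on the exact threshold \eqref{eq:PSS-bound-n-other}; the claim that ``tracking constants'' and $\ln(1+x)\leq x$ recover it is not substantiated.

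The paper avoids the circularity entirely by computing the inverse map explicitly: Lemma~\ref{lem:inverse} gives closed-form coefficients $q(n,k,t)$ such that $\tilde W=d[n+k]\,\Psi^{(n)}_{k\to k}(W)$, and hence
\[
p_{\tilde W}(\phi,y)=d[n+k]\sum_{t=0}^{k}q(n,k,t)\,\|\phi\|^{2(k-t)}\,((k)_{k-t})^{-2}\,(\Delta^{k-t}p_{W})(\phi,y).
\]
Now Bernstein is applied to $p_{W}$, whose maximum $M(W)$ is a \emph{known} input, and one obtains directly the lower bound \eqref{equ:PsiBound}. The subsequent estimate (bounding the ratio of consecutive $q$-terms by $r=dk(2k-1)/(n+k+d-1)$ and majorising the truncated sum by $(e^{r}-1)/r$) is where the logarithm in \eqref{eq:PSS-bound-n-other} actually appears; note that $\ln(1+x)\leq x$ is used only to compare \eqref{eq:PSS-bound-n-other} with \eqref{eq:PSS-bound-n-k1}, not to derive the threshold itself. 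To repair your argument, either compute the explicit inverse $\Psi^{(n)}_{k\to k}$ as in Lemma~\ref{lem:inverse}, or at minimum recognise that Bernstein should be applied to $p_{W}$ after expressing $p_{\tilde W}$ in terms of it, not the other way around.
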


Note that our main theorem covers a more general case than the one in \cite{reznick1995uniform}: the polynomials we consider have a set of extra $D$ variables, in the spirit of Quillen's result from \cite{quillen1968representation}; we refer the reader to the Introduction for historical considerations. 

Before we prove Theorem \ref{thm:PSS-C}, let us introduce one of the main technical ingredients we shall use. On $\cB(\vee^k \C^d)$, we define the linear map
\[\Phi^{(n)}_{k \to k} :=   \sum_{s=0}^k c(n,k,s) \operatorname{tr}^*_{s \to k} \circ \operatorname{tr}_{k \to s}.\]
By the Chiribella identity (see Theorem \ref{thm:chiribella}), this map is closely related to the measurement-and-prepare map $\operatorname{MP}_{n \to k}$ introduced in \eqref{equ:MP}. In fact we have, for $n\geq k$,
\begin{equation}
\operatorname{MP}_{n \to k} = \Phi^{(n)}_{k \to k} \circ \operatorname{tr}_{n \to k}.
\label{equ:MPviaPhi}
\end{equation}

One of our main technical observations is that the map $\Phi^{(n)}_{k\to k}$ has a particularly nice, explicit, compositional inverse: 
 
\begin{lemma}
\label{lem:inverse}
For $n\geq k$ we define on $\cB(\vee^k \C^d)$ the linear map
\begin{align*}
\Psi^{(n)}_{k \to k} &:=   \sum_{t=0}^k q(n,k,t)  \operatorname{tr}^*_{t \to k} \circ \operatorname{tr}_{k \to t}\\
 &=\sum_{t=0}^k q(n,k,t) \frac{d[k]}{d[t]}  \operatorname{Clone}_{t \to k} \circ \operatorname{tr}_{k \to t}
\end{align*}
with 
\begin{equation}\label{eq:def-q}
q(n,k,t) :=  (-1)^{t+k}\frac{\binom{n+t}{t}\binom{k}{t}}{\binom{n}{k}}\frac{d[n+t]}{d[n+k]}.
\end{equation}
Then, we have on $\cB(\vee^k \C^d)$:
\[
\Phi^{(n)}_{k \to k} \circ \Psi^{(n)}_{k \to k}  = \operatorname{id}_{k \to k}.
\]
\end{lemma}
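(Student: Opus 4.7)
The plan is to simultaneously diagonalize $\Phi^{(n)}_{k\to k}$ and $\Psi^{(n)}_{k\to k}$ and reduce the statement to the identity $\mu_m\nu_m=1$ on each joint eigenspace. Since both maps are linear combinations of the operators $E_a:=\operatorname{tr}^*_{a\to k}\circ\operatorname{tr}_{k\to a}$, it suffices to diagonalize the family $\{E_a\}_{a=0}^{k}$ once and verify a combinatorial identity at the level of eigenvalues.

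I work in the polynomial picture. By Lemma~\ref{lem:traceAsDiff} together with \eqref{eq:adjoint-of-trace-poly-language}, $E_a$ corresponds to $p\mapsto ((k)_{k-a})^{-2}\|x\|^{2(k-a)}\Delta^{k-a}p$. The joint eigenspaces arise from the bi-harmonic decomposition
\[
V_k=\bigoplus_{m=0}^{k}\|x\|^{2(k-m)}\mathcal H_m,
\]
where $V_k$ denotes bi-Hermitian polynomials of bi-degree $(k,k)$ and $\mathcal H_m$ denotes those polynomials of bi-degree $(m,m)$ that are killed by $\Delta$. A short $\mathfrak{sl}_2$-style induction based on the commutation relation $\Delta(\|x\|^2p)=(d+R+\bar R)p+\|x\|^2\Delta p$ (with $R,\bar R$ the complex Euler operators in $x,\bar x$) produces the key identity
\[
\Delta^r(\|x\|^{2j}h_m)=(j)_r\,(d+2m+j-1)_r\,\|x\|^{2(j-r)}h_m\qquad(r\le j,\;h_m\in\mathcal H_m).
\]
Setting $j=k-m$ and $r=k-a$ shows that $E_a$ acts on $\|x\|^{2(k-m)}\mathcal H_m$ by the scalar
\[
\lambda(k,a,m,d):=\frac{(k-m)_{k-a}\,(d+m+k-1)_{k-a}}{((k)_{k-a})^{2}}
\]
for $a\ge m$, and by $0$ otherwise. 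Hence $\Phi^{(n)}_{k\to k}$ and $\Psi^{(n)}_{k\to k}$ act on this summand by the scalars
\[
\mu_m=\sum_{a=m}^{k}c(n,k,a)\,\lambda(k,a,m,d),\qquad \nu_m=\sum_{t=m}^{k}q(n,k,t)\,\lambda(k,t,m,d),
\]
and the proposition reduces to $\mu_m\nu_m=1$ for every $m\in\{0,1,\ldots,k\}$.

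What remains is to evaluate both sums in closed form. Substituting $u=k-a$ (respectively $u=k-t$) and rewriting the falling factorials as rising Pochhammer symbols turns each sum into a terminating Gauss series ${}_2F_1(-N,b;c;1)$ which the Chu--Vandermonde identity sums explicitly. One obtains
\[
\mu_m=\frac{(n!)^2\,(n+d+k-1)!}{(n+k)!\,(n-m)!\,(n+d+m-1)!},
\]
and the analogous computation for $\nu_m$ yields precisely the reciprocal, completing the proof.

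The main obstacle is recognizing the $\nu_m$ sum as a terminating ${}_2F_1$: this is exactly where the alternating sign $(-1)^{t+k}$ and the dimension ratio $d[n+t]/d[n+k]$ in \eqref{eq:def-q} do their work, converting what would otherwise be an arbitrary signed combination of falling factorials into the form required by Chu--Vandermonde. After that recognition the remaining computations are routine bookkeeping with factorials.
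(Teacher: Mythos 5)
Your proof is correct but uses a genuinely different argument than the paper. The paper shows instead the equivalent identity $\operatorname{tr}_{n\to k}=\Psi^{(n)}_{k\to k}\circ\operatorname{MP}_{n\to k}$, evaluating both sides on rank-one tensor powers $|\alpha\rangle\langle\alpha|^{\otimes n}$ paired against $|\beta\rangle\langle\beta|^{\otimes k}$; this reduces the claim to a polynomial identity in $x=|\langle\alpha,\beta\rangle|^2$, and the coefficient of each $x^s$ is a signed sum that collapses to $\delta_{k,s}$ by a direct telescoping binomial argument (essentially $(1-1)^{k-s}$). Your route instead diagonalizes the entire commutative family $\{E_a\}$ simultaneously via the bi-harmonic (Fischer-type) decomposition $V_k=\bigoplus_m \|x\|^{2(k-m)}\mathcal H_m$, derives the common eigenvalues from the $\mathfrak{sl}_2$ commutation relation $\Delta(\|x\|^2 p)=(d+R+\bar R)p+\|x\|^2\Delta p$, and reduces the lemma to the scalar identity $\mu_m\nu_m=1$, which is then verified by Chu--Vandermonde. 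I checked your eigenvalue formula $\lambda(k,a,m,d)=(k-m)_{k-a}(d+m+k-1)_{k-a}/((k)_{k-a})^2$, your closed form $\mu_m=(n!)^2(n+d+k-1)!\,/\,[(n+k)!(n-m)!(n+d+m-1)!]$, and that $\nu_m$ indeed evaluates (via the sum $\sum_u(-1)^u\binom{B}{u}\binom{E+u}{E-F}=(-1)^B\binom{E}{E-F-B}$ with $B=k-m$, $E=n+m+d-1$, $F=d+2m-1$) to the reciprocal. Comparatively, the paper's argument is shorter and self-contained in the ``evaluate on spanning vectors'' paradigm it uses repeatedly, whereas your approach is more structural: it identifies the joint eigenspaces once, so it transparently explains why $\Phi^{(n)}_{k\to k}$ is invertible (all $\mu_m>0$ for $n\geq k$), gives the spectrum of both maps for free, and would make bounds such as \eqref{equ:PsiBound} accessible by purely spectral reasoning. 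The one thing worth adding if you write this up fully is a sentence justifying that the bi-Hermitian Fischer decomposition exists and is direct (standard, but not stated in the paper), and that $E_a$ annihilates $\|x\|^{2(k-m)}\mathcal H_m$ for $a<m$ because $\Delta^{k-a}$ already kills it.
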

 \begin{proof}
Since the map $\operatorname{tr}^*_{k \to n}$ is injective (as, for $n\geq k$, its adjoint $\operatorname{tr}_{n\to k}$ is surjective) and the map $\Phi^{(n)}_{k \to k}$ is selfadjoint, the claim is equivalent to showing 
$$\operatorname{tr}^*_{k \to n} = \operatorname{tr}^*_{k \to n} \circ \Phi^{(n)}_{k \to k} \circ \Psi^{(n)}_{k \to k} = (\operatorname{MP}_{n \to k})^* \circ \Psi^{(n)}_{k \to k},$$
or, by taking adjoints, to the following equality of linear maps
\begin{align}
\nonumber \operatorname{tr}_{n \to k} &= \Psi^{(n)}_{k \to k} \circ  \operatorname{MP}_{n \to k} \\
\nonumber &=\sum_{t=0}^k q(n,k,t) \operatorname{tr}^*_{t \to k}\circ  \operatorname{tr}_{k \to t} \circ \operatorname{MP}_{n \to k}\\
\label{eq:inverse-main-equality}&= \sum_{t=0}^k q(n,k,t) \operatorname{tr}^*_{t \to k}\circ \frac{d[n+k]}{d[n+t]}\operatorname{MP}_{n \to t},
\end{align}
on which we are focusing next. 
We use the same idea from the proof of Theorem \ref{thm:Chiribella}: the equality above holds if and only if, when applying the maps to the element $|\alpha \rangle \langle \alpha |^{\otimes n}$ and taking the scalar product with $|\beta \rangle \langle \beta |^{\otimes k}$, we obtain identical results, for all unit norm $\alpha, \beta \in \mathbb C^d$. Letting $x = |\langle \alpha, \beta \rangle|^2 \in [0,1]$, we obtain, for the left hand side (using $\|\alpha\| = \|\beta\| = 1$)
$$\langle \beta^{\otimes k} | \operatorname{tr}_{n \to k} (|\alpha \rangle \langle \alpha |^{\otimes n}) | \beta^{\otimes k} \rangle = \langle \beta^{\otimes k} |\alpha ^{\otimes k} \rangle \langle \alpha ^{\otimes k}| \beta^{\otimes k} \rangle = x^k.$$
For the right hand side of \eqref{eq:inverse-main-equality}, denoting
$$\tilde q(n,k,t):=q(n,k,t) \frac{d[n+k]}{d[n+t]} = (-1)^{t+k}\frac{\binom{n+t}{t}\binom{k}{t}}{\binom{n}{k}},$$
we obtain (see the proof of Theorem \ref{thm:Chiribella} for the combinatorics in the penultimate line):
\begin{align*}
&\sum_{t=0}^k \tilde q(n,k,t) \langle \beta^{\otimes k} | \operatorname{tr}^*_{t \to k}[ \operatorname{MP}_{n \to t} (|\alpha \rangle \langle \alpha|^{\otimes n})] | \beta^{\otimes k} \rangle \\
&\qquad =\sum_{t=0}^k \tilde q(n,k,t) \operatorname{tr}\big[ |\beta \rangle \langle \beta|^{\otimes t}\operatorname{MP}_{n \to t} (|\alpha \rangle \langle \alpha|^{\otimes n})\big]\\
&\qquad =\sum_{t=0}^k \tilde q(n,k,t) d[n+t] \int  |\langle \alpha | \phi \rangle |^{2n} |\langle \beta | \phi \rangle|^{2t}\mathrm{d}\phi\\
&\qquad =\sum_{t=0}^k \tilde q(n,k,t)  \operatorname{tr}\left[P_{sym}^{(n+t,d)} (|\alpha \rangle \langle \alpha|^{\otimes n} \otimes |\beta \rangle \langle \beta|^{\otimes t})\right]\\
&\qquad =\sum_{t=0}^k \tilde q(n,k,t)  \sum_{s=0}^t \frac{\binom{t}{s} \binom{n}{s}}{\binom{n+t}{t}} x^s \\
&\qquad =\sum_{s=0}^k x^s \sum_{t=s}^k \tilde q(n,k,t)   \frac{\binom{t}{s} \binom{n}{s}}{\binom{n+t}{t}}.
\end{align*}
We now compute the inner sum for each $s=0,\ldots,k$ separately. Simple algebraic manipulations and the substitution $t'=k-t$ give
\begin{align*}
\sum_{t=s}^k \tilde q(n,k,t) \frac{\binom{t}{s} \binom{n}{s}}{\binom{n+t}{t}} &=\frac{(k!)^2(n-k)!}{(s!)^2(n-s)!}\sum_{t=s}^k(-1)^{k+t}\frac{1}{(k-t)!(t-s)!}\\
&=\frac{(k!)^2(n-k)!}{(s!)^2(n-s)!}\frac{(-1)^{k+s}}{(k-s)!}\sum_{t'=0}^{k-s}(-1)^{t'}\binom{k-s}{t'}\\
&=\frac{(k!)^2(n-k)!}{(s!)^2(n-s)!}\frac{(-1)^{k+s}}{(k-s)!}\delta_{k,s} = \delta_{k,s}.
\end{align*}
This shows that, for each $\alpha,\beta$, both sides of \eqref{eq:inverse-main-equality} evaluate to the same quantity, namely $x^k$, finishing the proof.
 \end{proof}

We have now all the ingredients to give the proof of our main result. 

\begin{proof}[Proof of Theorem \ref{thm:PSS-C}]
Using Lemma \ref{lem:inverse} and the adjoint of \eqref{equ:MPviaPhi} we have the following equality
\[
\operatorname{tr}^{*}_{k\to n} = \operatorname{tr}^{*}_{k\to n} \circ \Phi^{(n)}_{k\to k}\circ \Psi^{(n)}_{k\to k} = (\operatorname{MP}_{n\to k})^{*} \circ \Psi^{(n)}_{k\to k} =d[n+k]\int   \bra{\phi^{\otimes k}} \Psi^{(n)}_{k\to k}( \cdot)\ket{\phi^{\otimes k}} \ketbra{\phi}{\phi}^{\otimes n} \mathrm{d}\phi,
\]
where the last equality holds since $\mathrm{d}\phi$ is a $(n+k)$-design. Applying $\operatorname{tr}^{*}_{k\to n} \otimes \operatorname{id}_D$ to $W$ and going to the polynomial picture gives thus the following equality (see also Eq.~\eqref{eq:adjoint-of-trace-poly-language}):
\begin{align*}
\|x\|^{2(n-k)}p_W(x,y) &= \|x\|^{2(n-k)} \langle x^{\otimes k} \otimes y | W | x^{\otimes k} \otimes y \rangle \\
&= \langle x^{\otimes n} \otimes y | (\operatorname{tr}^*_{k\to n} \otimes \operatorname{id}_D)(W) | x^{\otimes n} \otimes y \rangle \\
&= d[n+k]\left\langle x^{\otimes n} \otimes y \left| \int   \bra{\phi^{\otimes k}} (\Psi^{(n)}_{k\to k} \otimes \operatorname{id}_D)(W)\ket{\phi^{\otimes k}} \ketbra{\phi}{\phi}^{\otimes n} \mathrm{d}\phi \right| x^{\otimes n} \otimes y \right\rangle \\
&= d[n+k]\int \left\langle \phi^{\otimes k} \otimes y \left|  (\Psi^{(n)}_{k\to k} \otimes \operatorname{id}_D)(W) \right| \phi^{\otimes k} \otimes y \right\rangle |\langle \phi | x \rangle|^{2n}  \mathrm{d}\phi\\
&= \int   p_{\tilde W}(\phi,y) |\braket{\phi}{x}|^{2n}\mathrm{d}\phi,
\end{align*}
where we have set (note the explicit dependence of $\tilde W$, and hence of $p_{\tilde W}$, on the input data $W$)
$$\tilde W := d[n+k] (\Psi^{(n)}_{k\to k} \otimes \operatorname{id}_D)(W)~\in\cB(\vee^k{\mathbb C}^d\otimes{\mathbb C}^D).$$

To conclude, we need to determine when $p_{\tilde W}$ is a positive polynomial. To this end, we insert the expansion of $ \Psi^{(n)}_{k\to k}$ from Lemma \ref{lem:inverse}. This leads to
\begin{align}
p_{\tilde W}(\phi,y) &= d[n+k]\sum_{t=0}^k q(n,k,t)  \bra{\phi^{\otimes k} \otimes y}(\operatorname{tr}^*_{t \to k} \circ \operatorname{tr}_{k \to t} \otimes \operatorname{id}_D)(W)\ket{\phi^{\otimes k} \otimes y} \\
\nonumber &= d[n+k]\sum_{t=0}^k q(n,k,t) \|\phi\|^{2(k-t)} \bra{\phi^{\otimes t} \otimes y}(\operatorname{tr}_{k \to t} \otimes \operatorname{id}_D)(W)\ket{\phi^{\otimes t} \otimes y} \\
\label{eq:proof-PSS-tr-Delta} &= d[n+k]\sum_{t=0}^k q(n,k,t) \|\phi\|^{2(k-t)} p_{(\operatorname{tr}_{k \to t} \otimes \operatorname{id}_D)(W)}(\phi, y)\\
\nonumber & = d[n+k]\sum_{t=0}^k q(n,k,t) \|\phi\|^{2(k-t)}  ((k)_{k-t})^{-2} (\Delta^{k-t}p_W)(\phi, y),\label{last-step-before-applying-bernstein}
\end{align}
where we used Lemma \ref{lem:traceAsDiff} in the last step; note that the (complex) Laplacian acts only on the first set of variables (corresponding to $\phi$). Note also that for $t=k$, the corresponding summand contains $p_W$, and the coefficient $q(n,k,k) = (n+k)!(n-k)!(n!)^{-2}$ is positive. Using Lemma \ref{lem:Bernstein} to upper bound the absolute values of the remaining summands leads to (for $\|\phi\|=\|y\|=1$):
\begin{equation}
p_{\tilde W}(\phi,y) \geq d[n+k]\left(m(W)q(n,k,k) - M(W)\sum_{t=0}^{k-1} |q(n,k,t)| ((k)_{k-t})^{-2}(d/2)^{k-t}(2k)_{2k-2t}\right)
\label{equ:PsiBound}
\end{equation}

In the case $k=1$, this is nonnegative if 
$$m(W)q(n,1,1) \geq d M(W) |q(n,1,0)|,$$
which, after computing $q(n,1,1) = (n+1)/n$ and $q(n,1,0) = -(n+1)/(n(n+d))$, yields \eqref{eq:PSS-bound-n-k1}.

For general $k\geq1$, we bound the negative term in the previous expression from above by the truncation of the Taylor expansion of a certain exponential function (we borrow the idea from the proof of \cite[Theorem 3.11]{reznick1995uniform}). To do this, using the formula for $q(n,k,t)$ from Lemma \ref{lem:inverse}, an elementary computation for $0\leq t\leq k-2$ gives first the following $t$-independent upper bound:
\begin{align}
\nonumber&\frac{|q(n,k,t)|(k-t)!((k)_{k-t})^{-2}(d/2)^{k-t}(2k)_{2k-2t}}{|q(n,k,t+1)|(k-t-1)!((k)_{k-t-1})^{-2}(d/2)^{k-t-1}(2k)_{2k-2t-2}}\\
\label{aux-bound-t}&\qquad\qquad\qquad\qquad\qquad\qquad = \frac{d(t+1)(2t+1) }{n+t+d}\leq \frac{dk(2k-1)}{n+k+d-1}=:r,
\end{align}
where the inequality arises by setting $t=k-1$ in the previous expression, which is an increasing function in $t\in[0,\infty)$; note that the choice $t=k-1$ is sub-optimal, leading to slightly worse but nicer final results, see Remark \ref{rem:better-uglier-bounds}. Applying this estimate repeatedly in the previous sum and changing summation order leads to 
\begin{align*}
\sum_{t=0}^{k-1} |q(n,k,t)|& ((k)_{k-t})^{-2} (d/2)^{k-t}(2k)_{2k-2t}\\
&=\sum_{t=0}^{k-1}\frac{1}{(k-t)!}|q(n,k,t)|  (k-t)!((k)_{k-t})^{-2} (d/2)^{k-t}(2k)_{2k-2t}\\
&\leq \frac{d(2k-1)|q(n,k,k-1)|}{k}\sum_{t=0}^{k-1}\frac{1}{(k-t)!}r^{k-t-1}\\
&=\frac{d(2k-1)|q(n,k,k-1)|}{k}\sum_{s=1}^k\frac{r^{s-1}}{s!}\\
&\leq\frac{d(2k-1)|q(n,k,k-1)|}{k}\frac{e^r-1}{r}.
\end{align*}

Inserting this estimate in \eqref{equ:PsiBound} shows that $p_{\tilde W}(\phi,y)\geq 0$ whenever $n$ satisfies

\begin{align*}
m(W) q(n,k,k) \geq M(W) \frac{d(2k-1)|q(n,k,k-1)|}{k}\frac{e^r-1}{r}\\
\iff \frac{k^2d(2k-1)}{(n+k+d-1)k}\frac{M(W)}{m(W)} \frac{e^r-1}{r} \leq 1.
\end{align*}
Using the fact that the function $r \mapsto (e^r-1)/r$ is increasing, we find that it is sufficient that $n$ satisfies, for some $\Gamma >0$ to be determined later,
\begin{align*}
\frac{d k(2k-1)}{n+k+d-1} \frac{M(W)}{m(W)} \frac{e^\Gamma-1}{\Gamma} &\leq 1\\
\text{and}\qquad r= \frac{dk(2k-1)}{n+k+d-1} &\leq \Gamma
\end{align*}
in order for $p_{\tilde W}$ to be non-negative. Re-arranging terms, we obtain the following two sufficient conditions:
\begin{align*}
n &\geq dk(2k-1) \frac{e^\Gamma-1}{\Gamma} \frac{M(W)}{m(W)} - d-k+1 \\
n &\geq dk(2k-1) \frac{1}{\Gamma} - d-k+1.
\end{align*}
We now choose $\Gamma$ such that the two inequalities are identical,
$$\Gamma = \ln \left( 1 + \frac{m(W)}{M(W)} \right),$$
and the bound (\ref{eq:PSS-bound-n-other}) follows. That the bound (\ref{eq:PSS-bound-n-k1}) is better in the case $k=1$, can be easily seen from the inequality $\ln(1+m/M)\leq m/M$.
\end{proof}

\begin{remark}\label{rem:better-uglier-bounds}
When $k\geq2$, any $n\geq k$ with 
\begin{equation}\label{eq:PSS-bound-n}
n \geq \frac{d(k-1)(2k-3)}{\ln \left( 1 + \frac{(k-1)(2k-3)}{k(2k-1)}\frac{m(W)}{M(W)} \right)} - d-k+2
\end{equation}
(instead of Eq.\ (\ref{eq:PSS-bound-n-other})) is sufficient for the conclusions of Theorem \ref{thm:PSS-C}. In other words, Eq.\ \eqref{eq:PSS-bound-n} is a better bound (a weaker requirement on $n$) than \eqref{eq:PSS-bound-n-other}. That \eqref{eq:PSS-bound-n} suffices, follows as in the proof of Theorem \ref{thm:PSS-C} when using $d(k-1)(2k-3)/(n+d+k-2)$ as the upper bound in (\ref{aux-bound-t}), which is obtained by setting $t=k-2$ in the preceding expression, which is the best bound one can obtain using the monotonicity in $t$. The proof proceeds by replacing $r:=d(k-1)(2k-3)/(n+d+k-2)$, which still has to satisfy $r\leq\Gamma$. Setting $\Gamma = \ln \left( 1 + \frac{(k-1)(2k-3)}{k(2k-1)}\frac{m(W)}{M(W)} \right)$, one sees that the bound for $n$ from \eqref{eq:PSS-bound-n} suffices.
\end{remark}

\begin{remark}[Less stringent bounds on $n$]\label{bounds-for-k-geq-2}
Our bounds (\ref{eq:PSS-bound-n}) and (\ref{eq:PSS-bound-n-k1}) on $n$ are better (less stringent) than the ones from \cite[Theorem 1]{to2006effective}, even by roughly a factor of $2$ for the case $k=1$ in Eq.\ (\ref{eq:PSS-bound-n-k1}). This is because we used a better Bernstein-type inequality in the complex case (our Lemma \ref{lem:Bernstein}) than these authors.

Even less stringent bounds on $n$ may be found, for given $m(W)$, $M(W)$, $d$ and $k$, in a numerical way, namely by searching for the smallest $n\geq k$ such that the expression in square brackets in (\ref{equ:PsiBound}) becomes nonnegative; eqs.\ (\ref{eq:PSS-bound-n-k1}) and (\ref{eq:PSS-bound-n}) give a guarantee for when the search has to terminate. Note that the everything in the proof after Eq.\ (\ref{equ:PsiBound}) was devoted merely to derive the simple analytical expressions given in (\ref{eq:PSS-bound-n-k1}) and (\ref{eq:PSS-bound-n}) as sufficient bounds on $n$.

One can even obtain somewhat better lower bounds on $p_{\tilde W}(\phi,y)$ than the one given in (\ref{equ:PsiBound}) using an idea of \cite{nam2008effective}, and these can again be used in analytical \cite[Section 3.2]{nam2008effective} or numerical ways to obtain sufficient bounds on $n$ for a positive representation. We compare all these analytical and numerical bounds for a real version of this result in Example \ref{exa:Motzkin-R}.
\end{remark}

\begin{remark}\label{rem:sos-form}
Theorem \ref{thm:PSS-C} guarantees $p_{\tilde W}(\phi,y)$ to be nonnegative for any $\phi,y$ if only $n$ is sufficiently big. As $p_W$ is of homogeneous degree $1$ in $y$ and $\overline{y}$, we can write $p_{\tilde W}(\phi,y)=\langle y|\tilde{W}_\phi|y\rangle$ where each matrix $\tilde{W}_\phi\in\cH(\mathbb C^d)$ is positive semidefinite. With the eigendecompositions ${\tilde W}_\phi=\sum_{i=1}^D|\tilde w_\phi^{(i)}\rangle\langle\tilde w_\phi^{(i)}|$ this becomes $p_{\tilde W}(\phi,y)=\sum_{i=1}^D|\langle \tilde w_\phi^{(i)}|y\rangle|^2$ for any $\phi$, $y$. Inserting this into (\ref{eq:sos-integral-repn}) and using the construction of discrete, finitely supported complex spherical designs (Appendix \ref{spherical-design-appendix}), this shows constructively the existence of a sums-of-squares decomposition of the following special form (cf.\ \eqref{eq:sos-integral-repn}):
\begin{equation}
\|x\|^{2(n-k)}p_W(x,y) = \sum_\phi\sum_{i=1}^D|\braket{\phi}{x}|^{2n}|\braket{w_\phi^{(i)}}{y}|^2,
\end{equation}
where the sum over the unit vectors $\phi$ of the finite spherical design contains at most $(n+k+1)^{2d}$ terms and we have taken the weights into the normalization of the vectors $w_\phi^{(i)}$. The obtained sum-of-squares decomposition is thus very special as each term is a $2n$-th power of the absolute value of a linear form in $x$ multiplied with the absolute square of a linear form in $y$.
\end{remark}

\section{The case of real polynomials}\label{sec:PSS-R}

We turn now to the case of real polynomials, the classical setting of the study of positive polynomials and the corresponding Positivstellens\"atze \cite{quillen1968representation,schmuedgen1991positivstellensatz,putinar1993positive,reznick1995uniform}. It turns out that the proof strategy we developed for the complex case can be adapted to the real situation, yielding a small improvement of	  Reznick's Positivstellensatz \cite[Theorem 3.12]{reznick1995uniform}. We shall outline the main steps below, but we now point out two important facts. The first one is that, in the real case, the objects and maps we need to set up do not have a direct interpretation in the language of quantum information (which is a theory built on the field of complex numbers). The second point is that our derivation in the real case follows closely the proof strategy from \cite{reznick1995uniform}, once one moves from the language of polynomials to that of linear algebra; we shall emphasize which are the similarities and the (small) differences as we move on. 

Let us first explain in detail the relation between the symmetric subspace and homogeneous polynomials in the real case. We denote by $\mathcal{H}_{n}(\R^d)$ the vector space of real homogeneous polynomials in $d$ real variables, of degree $n$. To any symmetric vector $v\in \vee^{n}\R^d$ we associate a homogeneous polynomial $p_v\in \mathcal{H}_{n}(\R^d)$ of degree $n$ given by
\[
p_v(x) = \braket{x^{\otimes n}}{v},
\]
for $x\in \R^d$. Note that the correspondence between $\mathcal{H}_{n}(\R^d)$ and $\vee^{n}\R^d$ introduced here is one-to-one, and we shall often switch between the ``polynomial'' and the ``linear algebra'' viewpoints. As an important example, consider the case of the square of the norm, which corresponds to the (un-normalized) \emph{maximally entangled state} $\Omega_d$: 
$$\|x\|^{2} = \sum_{i=1}^d x_i^2 \in \mathcal{H}_2(\R^d)  \qquad \longleftrightarrow \qquad \ket{\Omega_d} = \sum_{i=1}^d e_i \otimes e_i \in \vee^2 \R^d.$$
Since we are interested in positive polynomials, we shall focus in what follows on the case of polynomials of even degree. By the correspondence above, a linear map $F:\vee^{2n}\R^d \to \vee^{2k}\R^d$ corresponds to a linear map $\tilde F:\mathcal{H}_{2n}(\R^d)\to \mathcal{H}_{2k}(\R^d)$ and in the following we shall abuse notation by writing $F(v)$ and $F(p)$ for $p=p_v$ interchangeably. On the space $\vee^{2n}\R^d$ we can define a ``partial trace'' operation $\operatorname{tr}_{n\to k}:\vee^{2n}\R^{d}\to \vee^{2k}\R^{d}$ (we assume here $k \leq n$) via 
\[
\operatorname{tr}_{n\to k}(v) = \langle\Omega_d^{\otimes (n-k)}, v\rangle = (\Omega_d^{\otimes (n-k)} \otimes I_d^{\otimes 2k})^*v,
\]
where $I_d\in \M_d(\R)$ is the identity matrix. On $\mathcal{H}_{2n}(\R^d)$ we can define the real Laplacian $\Delta_\R = \sum^d_{i=1}\frac{\partial^2}{\partial x^2_i}$. When applied to polynomials, it turns out that the map $\operatorname{tr}_{n\to k}$ is, up to a constant, an iterated Laplacian:

\begin{lemma}[Partial trace vs. Laplacian]
For any $p\in \mathcal{H}_{2n}(\R^d)$ we have
\[
(2n)_{2(n-k)}\operatorname{tr}_{n\to k}(p) = \Delta_\R^{n-k}p.
\]
\label{lem:TraceLapReal}
\end{lemma}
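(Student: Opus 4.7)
The approach mirrors the proof of Lemma \ref{lem:traceAsDiff} in the complex case: since both $p\mapsto\operatorname{tr}_{n\to k}(p)$ and $p\mapsto\Delta_\R^{n-k}p$ are linear maps on $\mathcal H_{2n}(\R^d)$, it suffices to verify the identity on a spanning family. The natural choice is the family $\{y^{\otimes 2n}:y\in\R^d\}\subseteq\vee^{2n}\R^d$, which spans the symmetric subspace (the real analogue of the spanning fact quoted from \cite[Section I.5]{bhatia1997matrix}); under the correspondence $v\mapsto p_v$ used in the excerpt, these symmetric tensors correspond to the polynomials $p_{y^{\otimes 2n}}(x)=\langle x,y\rangle^{2n}$.

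Next I would compute both sides explicitly on such a test polynomial. For the differential side, iterating the elementary identity
\[
\Delta_\R \langle x,y\rangle^{2n} \;=\; \sum_{i=1}^d \partial_{x_i}^2(x_1y_1+\cdots+x_dy_d)^{2n} \;=\; 2n(2n-1)\,\|y\|^2\,\langle x,y\rangle^{2n-2}
\]
exactly $n-k$ times gives
\[
\Delta_\R^{n-k}\langle x,y\rangle^{2n} \;=\; (2n)_{2(n-k)}\,\|y\|^{2(n-k)}\,\langle x,y\rangle^{2k}.
\]
For the partial-trace side, using $\langle\Omega_d,y^{\otimes 2}\rangle=\sum_i y_i^2=\|y\|^2$ one gets
\[
\operatorname{tr}_{n\to k}\!\bigl(y^{\otimes 2n}\bigr) \;=\; \bigl(\langle\Omega_d,y^{\otimes 2}\rangle\bigr)^{n-k}\,y^{\otimes 2k} \;=\; \|y\|^{2(n-k)}\,y^{\otimes 2k},
\]
whose associated polynomial is $\|y\|^{2(n-k)}\langle x,y\rangle^{2k}$. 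Multiplying by $(2n)_{2(n-k)}$ matches the differential side exactly, which by linearity establishes the identity on all of $\mathcal H_{2n}(\R^d)$.

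There is no serious obstacle here; the only points requiring minor care are ensuring the spanning statement for $\vee^{2n}\R^d$ is available over $\R$ (which follows by a polarization argument identical to the complex one), and correctly tracking the combinatorial coefficient $(2n)_{2(n-k)}=(2n)(2n-1)\cdots(2k+1)$ produced by $n-k$ successive applications of $\Delta_\R$. Alternatively, one can establish the case $k=n-1$ as a base step and then iterate, using the fact that $\operatorname{tr}_{n\to k}=\operatorname{tr}_{k+1\to k}\circ\cdots\circ\operatorname{tr}_{n\to n-1}$ on the symmetric subspace, which decomposes the falling factorial $(2n)_{2(n-k)}$ into the corresponding product $\prod_{j=k+1}^{n}(2j)(2j-1)$.
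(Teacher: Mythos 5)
Your proposal is correct and follows essentially the same route as the paper: verify the identity on the spanning family $\{y^{\otimes 2n}:y\in\R^d\}$, compute $\Delta_\R\langle x,y\rangle^{2n}=2n(2n-1)\|y\|^2\langle x,y\rangle^{2n-2}$ and $\operatorname{tr}_{n\to n-1}(y^{\otimes 2n})=\|y\|^2 y^{\otimes(2n-2)}$, and iterate to collect the falling factorial $(2n)_{2(n-k)}$. The only cosmetic difference is that you compute the $(n-k)$-fold result directly and only mention the single-step iteration as an alternative, whereas the paper establishes the $k=n-1$ case first and iterates.
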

\begin{proof}
Recall that the set $\{ {v}^{\otimes 2n}: {v}\in \R^d\}$ spans $\vee^{2n}\R^d$ \cite[Theorem 3]{harrow2013church}. Therefore, the corresponding set of polynomials $\{ p_{{v}^{\otimes 2n}}(x) = \braket{x}{v}^{2n}\}_{v \in \R^d}$ spans $\mathcal{H}_{2n}(\R^d)$ and it will be sufficient to show the lemma for this spanning set. Note that on one hand
\[
\Delta_\R p_{{v}^{\otimes 2n}}(x) = \Delta_\R[(v_1x_1 + \cdots v_dx_d)^{2n}] = 2n(2n-1)\|v\|^2\braket{x}{v}^{2n-2} = 2n(2n-1)\|v\|^2 p_{{v}^{\otimes (2n-2)}}(x).
\]
On the other hand 
\[
\operatorname{tr}_{n\to (n-1)}({v}^{\otimes 2n}) = \|v\|^2{v}^{\otimes (2n-2)}.
\]
Direct comparison of the two expressions shows that
\[
(2n)_{2}\operatorname{tr}_{n\to (n-1)}(p) = \Delta_\R p .
\] 
Finally, by iterating the previous formula the statement of the lemma follows.
\end{proof}

Using the Hilbert space structure of $\vee^{2n}\R^d$ we can introduce the dual map of $\operatorname{tr}_{n\to k}$, $\operatorname{tr}^*_{k\to n}:\vee^{2k}\R^d\to \vee^{2n}\R^d$. It is easy to express this map in the ``polynomial picture''. For this consider $v\in \vee^{2k}\R^d$ and compute 
\[
\operatorname{tr}^*_{k\to n}(p_{v}(x)) = \braket{x^{\otimes 2n}}{\operatorname{tr}^*_{k\to n}(v)} = \braket{\operatorname{tr}_{n\to k}(x^{\otimes 2n})}{v} = \|x\|^{2(n-k)}p_v(x).
\]
In other words, the map $\operatorname{tr}^*$ (which is related to the cloning operation from quantum information theory in the complex setting) corresponds to multiplying a polynomial with an even power of the euclidean norm of the variable vector. We present the correspondence between symmetric tensors and homogeneous real polynomials, as well as the dual operations of multiplying with the norm and taking the Laplacian in Figure \ref{fig:diagrams-polynomials-real}.

\begin{figure}
	\centering
	\begin{minipage}{6in}
		\centering
		\raisebox{-0.5\height}{\includegraphics[width=0.2\linewidth]{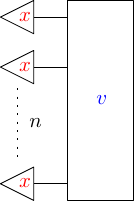}}
		\hspace*{.9in}
		\raisebox{-0.5\height}{\includegraphics[width=0.2\linewidth]{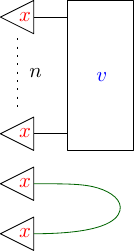}}
		\hspace*{.9in}
		\raisebox{-0.5\height}{\includegraphics[width=0.2\linewidth]{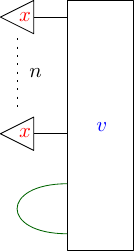}}	
	\end{minipage}	
	\caption{Graphical representation of the correspondence between symmetric tensors $\textcolor{blue}v$ and homogeneous polynomials. From left to right, we have depicted the diagrams for $p_{\textcolor{blue}{v}}(\textcolor{red}{x})$, $\textcolor{darkgreen}{\|\textcolor{red}{x}\|^{2}}p_{\textcolor{blue}{v}}(\textcolor{red}{x})$, and $\textcolor{darkgreen}{(n+2)_2 \Delta} p_{\textcolor{blue}{v}}(\textcolor{red}{x})$ respectively, where $\Delta$ is the real Laplacian. Note that multiplying with the norm and the Laplacian are, up to constants, dual operations.}
	\label{fig:diagrams-polynomials-real}
\end{figure}

Let us now introduce the measure and prepare map in the real case: 

\begin{align*}
\mathrm{MP}_{n \to k}^\R : \vee^{2n}\R^d &\to \vee^{2k}\R^d\\
v &\mapsto d_\R[n+k] \int_{\mathcal S_\R^{d-1}} \langle v, \phi^{\otimes 2n}\rangle \ket{\phi}^{\otimes 2k} \mathrm{d}\phi,
\end{align*}
with the normalization constant
\[
d_\R[n] = \frac{2^{2n}n!~\Gamma(n+d/2)}{(2n)!~\Gamma(d/2)}.
\]

The choice of the normalization function is motivated by the following Chiribella-like identity; the proof is similar to the complex case, see Appendix \ref{sec:app-HI-R} for the corresponding Hilbert identity.
$$
\operatorname{MP}_{n\to k}^\R = \sum^{\min(n,k)}_{s=0} c_\R(n,k,s) \operatorname{tr}^{*}_{s\to k}\circ \operatorname{tr}_{n\to s}$$
where 
$$c_\R(n,k,s) = 2^{2s} \frac{\binom{n+k}{2s,n-s,k-s}}{\binom{2n+2k}{2k}}.$$

As in the complex case, we define the map
\[\Phi^{\R,(n)}_{k \to k} :=   \sum_{s=0}^k c_\R(n,k,s) \operatorname{tr}^*_{s \to k} \circ \operatorname{tr}_{k \to s}.\]
This map is related to the measure-and-prepare map by the relation
$$\operatorname{MP}_{n \to k}^\R = \Phi^{\R,(n)}_{k \to k} \circ \operatorname{tr}_{n \to k}.$$
This is the decomposition \cite[Theorem 3.7]{reznick1995uniform} in Reznick's work. It is based on Hobson's identity \cite{hobson1931theory}, which can be seen as the real, polynomial analogue of the Chiribella identity (cf.~Theorem \ref{thm:Chiribella}). Its compositional inverse is given by
$$\Psi^{\R,(n)}_{k \to k} :=   \sum_{t=0}^k q_\R(n,k,t)  \operatorname{tr}^*_{t \to k} \circ \operatorname{tr}_{k \to t},$$
with coefficients 
$$q_\R(n,k,t) = (-1)^{k+t} 2^{-2k} \frac{\binom{2n+2t}{2t} \binom{2k}{k-t} }{\binom{n+t}{k+t}} \frac{d_\R[n+t]}{d_\R[n+k]},$$
such that 
\[
\sum^k_{t=s} q_\R(n,k,t) c_\R(n,t,s)\frac{d_\R\lbr n+k\rbr}{d_\R\lbr n+t\rbr} = \delta_{k,s}.
\]
In Reznick's derivation, this corresponds to \cite[Theorem 3.9]{reznick1995uniform}. We claim that our linear algebraic language is more elegant, but ultimately the two formulations are equivalent. 

The proof of Theorem \ref{thm:PSS-C} can be easily adapted to the real case, as follows, yielding a new variant of Reznick's real Positivstellensatz  \cite[Theorem 3.12]{reznick1995uniform}. To estimate the extreme values of derivatives of polynomials we shall need (real) Bernstein inequality, see Lemma \ref{lem:Bernstein-real}; we would like to stress again that this inequality is the only analytical tool used in the proof, the rest being basic linear algebra.

\begin{theorem}[Positivstellensatz, real case]\label{thm:PSS-R}
Consider $v\in \vee^{2k} \R^d \otimes \R^D$ with $m(v) >0$. Then for any $n \geq k$ such that
\begin{equation}\label{eq:PSS-bound-n-R}
2n \geq \frac{dk(2k-1)}{\ln \left( 1 + \frac{m(v)}{M(v)} \right)} +2 -2k-d,
\end{equation}
we have
\begin{equation}\label{eq:sum-of-2n-powers-R}
\|x\|^{2(n-k)}p_v(x,y) = \int d\phi  p_{\tilde v}(\phi,y) \braket{\phi}{x}^{2n}
\end{equation}
with $p_{\tilde v}(\phi,y) \geq 0$ for all $\phi \in \R^d$ and $y \in \R^D$. Therefore, $\|x\|^{2(n-k)}p_v(x,y)$ is a sum of squares. In the case $k=1$, the bound \eqref{eq:PSS-bound-n-R} can be improved to 
\begin{equation}\label{eq:PSS-bound-n-k1-R}
2n \geq d \frac{M(v)}{m(v)} -d.
\end{equation}
\end{theorem}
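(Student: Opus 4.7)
The plan is to transpose the proof of Theorem \ref{thm:PSS-C} to the real setting, replacing each ingredient by the real analogue already set up in this section. Starting from $\Phi^{\R,(n)}_{k\to k}\circ\Psi^{\R,(n)}_{k\to k}=\operatorname{id}$ together with the factorization $\operatorname{MP}^\R_{n\to k}=\Phi^{\R,(n)}_{k\to k}\circ\operatorname{tr}_{n\to k}$, taking adjoints gives $\operatorname{tr}^*_{k\to n}=(\operatorname{MP}^\R_{n\to k})^*\circ\Psi^{\R,(n)}_{k\to k}$. I would then apply $\operatorname{tr}^*_{k\to n}\otimes\operatorname{id}_D$ to $v$, expand $(\operatorname{MP}^\R_{n\to k})^*$ as an integral over the real unit sphere (against the Haar measure or any real spherical design of sufficient order), and translate to the polynomial picture using $p_{\operatorname{tr}^*_{k\to n}(v)}(x)=\|x\|^{2(n-k)}p_v(x)$. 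This directly produces the integral representation \eqref{eq:sum-of-2n-powers-R} with the explicit choice
\[\tilde v:=d_\R[n+k]\,(\Psi^{\R,(n)}_{k\to k}\otimes\operatorname{id}_D)(v).\]

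What remains is to verify that $p_{\tilde v}(\phi,y)\geq0$ once $n$ is large enough. Inserting the expansion of $\Psi^{\R,(n)}_{k\to k}$ and using Lemma \ref{lem:TraceLapReal} to rewrite each $\operatorname{tr}_{k\to t}$ as $((2k)_{2(k-t)})^{-1}\Delta_\R^{k-t}$ yields
\[p_{\tilde v}(\phi,y)=d_\R[n+k]\sum_{t=0}^k q_\R(n,k,t)\,\|\phi\|^{2(k-t)}\,\frac{(\Delta_\R^{k-t}p_v)(\phi,y)}{(2k)_{2(k-t)}}.\]
At $\|\phi\|=\|y\|=1$ the $t=k$ summand contributes $d_\R[n+k]\,q_\R(n,k,k)\,p_v(\phi,y)$ with a positive prefactor and is bounded below by $d_\R[n+k]\,q_\R(n,k,k)\,m(v)$; for the terms $t<k$ I would invoke the real Bernstein inequality (Lemma \ref{lem:Bernstein-real}) applied to $p_v(\,\cdot\,,y)$, whose bound $d^{k-t}(2k)_{2(k-t)}\,M(v)$ cancels exactly the falling-factorial denominator produced by Lemma \ref{lem:TraceLapReal}, leaving the clean estimate
\[p_{\tilde v}(\phi,y)\geq d_\R[n+k]\left(q_\R(n,k,k)\,m(v)-M(v)\sum_{t=0}^{k-1}|q_\R(n,k,t)|\,d^{k-t}\right).\]

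To close the argument I would mimic the bookkeeping used for Theorem \ref{thm:PSS-C}: compute the ratio $|q_\R(n,k,t)|\,d^{k-t}/\bigl(|q_\R(n,k,t+1)|\,d^{k-t-1}\bigr)$ from the explicit formula for $q_\R$, verify its monotonicity in $t$, bound it at $t=k-1$ by a $t$-independent quantity $r=dk(2k-1)/(2n+d+2k-2)$, and dominate the remaining sum by $|q_\R(n,k,k-1)|\,d\,(e^r-1)/r$ via the Taylor expansion of the exponential. Imposing non-negativity of the bracket together with $r\leq\Gamma$ and optimizing $\Gamma=\ln(1+m(v)/M(v))$ as in the complex case then yields \eqref{eq:PSS-bound-n-R}. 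The $k=1$ case bypasses the ratio step: direct evaluation of $q_\R(n,1,0)$ and $q_\R(n,1,1)$ using $d_\R[n+1]/d_\R[n]=(2n+d)/(2n+1)$ reduces the required inequality $q_\R(n,1,1)\,m(v)\geq d\,M(v)\,|q_\R(n,1,0)|$ to $(2n+d)\,m(v)\geq d\,M(v)$, which is \eqref{eq:PSS-bound-n-k1-R}. The main technical obstacle is the ratio computation for general $k$: the ratios of the normalization constants $d_\R[n+t]$ involve Gamma functions, making the simplification noticeably more delicate than its binomial analogue $d[\,\cdot\,]$ in the complex proof, but once the explicit ratio is in hand the Taylor-series estimate and the subsequent optimization proceed identically.
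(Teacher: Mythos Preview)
Your proposal is correct and follows essentially the same route as the paper's own proof, which explicitly says the argument is identical to that of Theorem~\ref{thm:PSS-C} with the real coefficients $q_\R$, Lemma~\ref{lem:TraceLapReal}, and Lemma~\ref{lem:Bernstein-real} substituted in, arriving at the same key inequality \eqref{eq:PsiBound-R} and the same consecutive-ratio bound $\frac{(t+1)(2t+1)}{2n+2t+d}$ (after inserting the $(k-t)!$ factor needed for the exponential-series trick). Your worry about the Gamma functions in $d_\R[\,\cdot\,]$ is unfounded: the ratio $d_\R[n+t]/d_\R[n+t+1]$ collapses to the rational expression $(2n+2t+1)/(2n+2t+d)$, so the simplification is no harder than in the complex case.
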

\begin{proof}
The proof idea is identical to that of Theorem \ref{thm:PSS-C}, however some details and coefficient values are the different. The first difference appears in equation \eqref{eq:proof-PSS-tr-Delta}, where the polynomial associated to the partial trace of $v$ is related to the Laplacian of $v$ by the following formula, proven in Lemma \ref{lem:TraceLapReal}
$$p_{(\operatorname{tr}_{k \to t} \otimes \operatorname{id}_D)(v)}(x,y) = ((2k)_{2(k-t)})^{-1} \Delta_\R^{k-t}p_v(x,y).$$
In the equation above, $p_v$ is a homogeneous polynomial of total degree $(2k,1)$ in the real variables $x_1, \ldots, x_d$ and $y_1, \ldots, y_D$, and $\Delta_\R$ is the ``real'' Laplacian, $\Delta_\R = \sum_{i=1}^d \partial^2 / \partial x_i^2$ acting on the first set of variables $x_i$. Using the usual Bernstein inequality (Lemma \ref{lem:Bernstein-real}), we get
$$|((2k)_{2(k-t)})^{-1} \Delta_\R^{k-t}p_v| \leq d^{k-t} M(v).$$
Plugging this into the expression for $p_{\tilde v}$, we obtain the following relation (this corresponds to \eqref{equ:PsiBound} in the complex case)
\begin{equation}
p_{\tilde v}(\phi,y) \geq d_\R[n+k] \left( m(v)q_\R(n,k,k) - M(v)\sum_{t=0}^{k-1} |q_\R(n,k,t)| d^{k-t}\right)
\label{eq:PsiBound-R}
\end{equation}
Exactly as in the complex case we bound the quotients of consecutive $q_\R$'s by 
\begin{equation}
\frac{|q_\R(n,k,t)|(k-t)!}{|q_\R(n,k,t+1)|(k-t-1)!} = \frac{(t+1)(2t+1)}{(2n+2t+d)}\leq \frac{k(2k-1)}{2n+2k+d-2},
\label{eq:ConsQBoundReal}
\end{equation}
where the last inequality comes from setting $t=k-1$ in the preceeding expression (using monotonicity). Again, we employ an exponential function approximation to derive the final estimate. Here, our proof strategy diverges qualitatively from the one in \cite[Theorem 3.12]{reznick1995uniform}, yielding the small improvement discussed in Remark \ref{rem:PSS-R-vs-Reznick}. We leave the computational details to the reader.
\end{proof}
\begin{remark}
Similarly as for Theorem \ref{thm:PSS-C} (cf.\ Remark \ref{bounds-for-k-geq-2}), in the case $k \geq 2$ one can obtain the following improved lower bound for $n$:
\begin{equation}\label{eq:PSS-bound-n-R-modified}
2n \geq \frac{d(k-1)(2k-3)}{\ln \left( 1 + \ \frac{(k-1)(2k-3)}{k(2k-1)}\frac{m(v)}{M(v)} \right)} +4-2k-d.
\end{equation}
We prefer however the bound in the statement, which has a more compact form. 
\end{remark}
\begin{remark}\label{rem:PSS-R-vs-Reznick}
The result above is to be compared with Renzick's Positivstellensatz \cite[Theorem 3.12]{reznick1995uniform}, which, in our notation, gives the bound
\begin{equation}\label{eq:PSS-bound-n-R-Reznick}
2n \geq \frac{dk(2k-1)}{\ln 2} \frac{M(v)}{m(v)} -d.
\end{equation}
Although the two bounds have the same leading orders in $d$ and $k$, our lower bound is smaller, since the constant in front of the leading term $dk^2$ is smaller: in the regime where $m(v)/M(v) \ll 1$, the ratio between the two lower bounds is $\ln 2<1$.
\end{remark}

We would now like to discuss the bounds above in a concrete situation in order to get an idea about the optimality of the lower bounds. 

\begin{example}\label{exa:Motzkin-R}
Consider the celebrated Motzkin polynomial to which we add a positive multiple of the norm:
\begin{equation}\label{eq:shifted-Motzkin}
p_\varepsilon(x,y,z) = x^4y^2+y^4z^2+z^4x^2-3x^2y^2z^2 + \varepsilon(x^2+y^2+z^2).
\end{equation}
Obviously, we have $m(p_\varepsilon) = \varepsilon$; using Lagrange multipliers, one can easily find  $M(p_\varepsilon) = \varepsilon + 4/27$. In Figure \ref{fig:bounds-1-2-q}, we compare three lower bounds on $n$. In the left panel, we plot the bounds from equations \eqref{eq:PSS-bound-n-R} and \eqref{eq:PSS-bound-n-R-modified}, observing that the latter performs better (here, $d=k=3$). In the right panel, we compare the bound from \eqref{eq:PSS-bound-n-R-modified} with the one obtained by working out the value of $n$ directly from equation \eqref{eq:PsiBound-R}. Not that this latter bound is necessarily better, since is shortcuts the second part of the proof of Theorem \ref{thm:PSS-R}.

\begin{figure}[htbp]
\begin{center}
\includegraphics[scale=0.55]{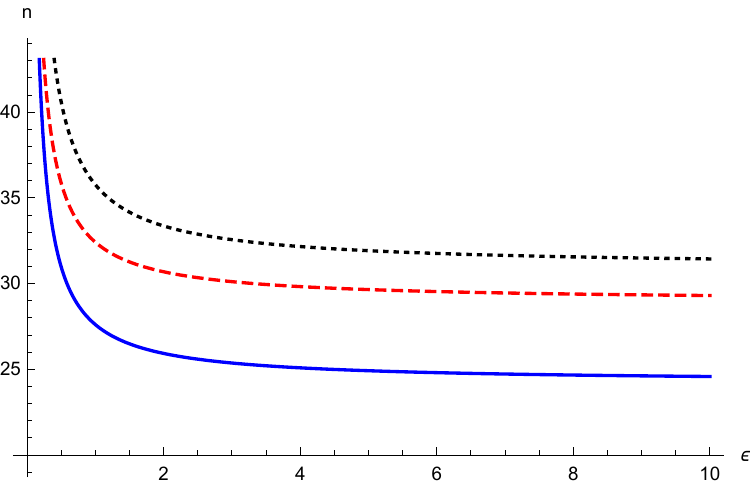} \quad  \includegraphics[scale=0.55]{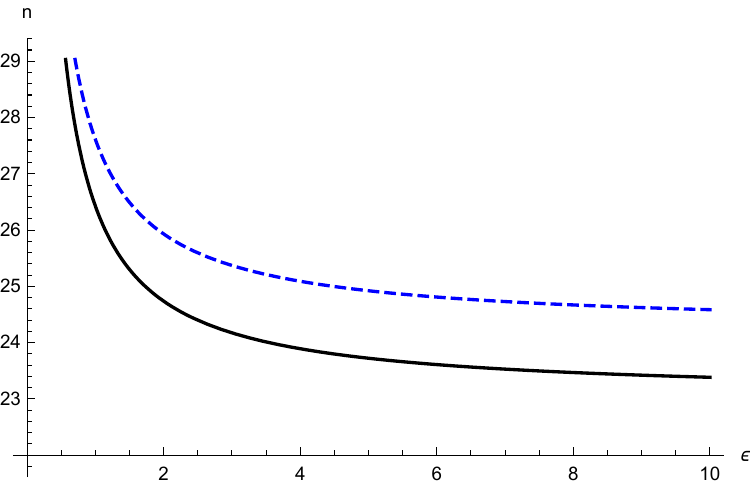}
\caption{Comparing lower bounds for $n$ needed for the real Positivstellensatz to hold for the shifted Motzkin polynomial from \eqref{eq:shifted-Motzkin}, as a function of $\varepsilon$. In the left panel, Reznick's bound \eqref{eq:PSS-bound-n-R-Reznick} (black, dotted curve), the bounds from \eqref{eq:PSS-bound-n-R} (red, dashed curve) and \eqref{eq:PSS-bound-n-R-modified} (blue curve) are plotted. On the right, the bound obtained from equation \eqref{eq:PsiBound-R} (black curve) is plotted against the bound from \eqref{eq:PSS-bound-n-R-modified} (blue, dashed curve).}
\label{fig:bounds-1-2-q}
\end{center}
\end{figure}

In Figure \ref{fig:boundq-testW}, we compare again the bound obtained by working out the value of $n$ directly from equation \eqref{eq:PsiBound-R} with the one obtained by asking that the coefficients of the polynomial 

\begin{equation}\label{eq:def-p-n-eps}
p_{n,\varepsilon}(x,y,z):=(x^2+y^2+z^2)^{n-3}p_\varepsilon(x,y,z)
\end{equation}
be non-negative. Indeed, if $n,\varepsilon$ are such that \eqref{eq:sum-of-2n-powers-R} holds, then the $[2p,2q,2r]$ coefficient of $p_{n,\varepsilon}$ reads (here, $p+q+r=n$)
$$ \int d\phi  \tilde p(\phi) \binom{2n}{2p,2q,2r} \phi_x^{2p}\phi_y^{2q}\phi_z^{2r} \geq 0.$$ 
Hence, for each fixed $n$, we can find numerically the smallest constant $\varepsilon_n >0$ such that, for all $\varepsilon \geq \varepsilon_n$, all coefficients of $p_{n,\varepsilon}$ are non-negative. Note that all monomials of $p_{n,\varepsilon}$ are of the form $c_{2p, 2q, 2r} x^{2p} y^{2q} z^{2r}$ for some non-negative $p,q,r$ with $p+q+r = n$, see \eqref{eq:def-p-n-eps}. 

\begin{figure}[htbp]
\begin{center}
\includegraphics[scale=0.85]{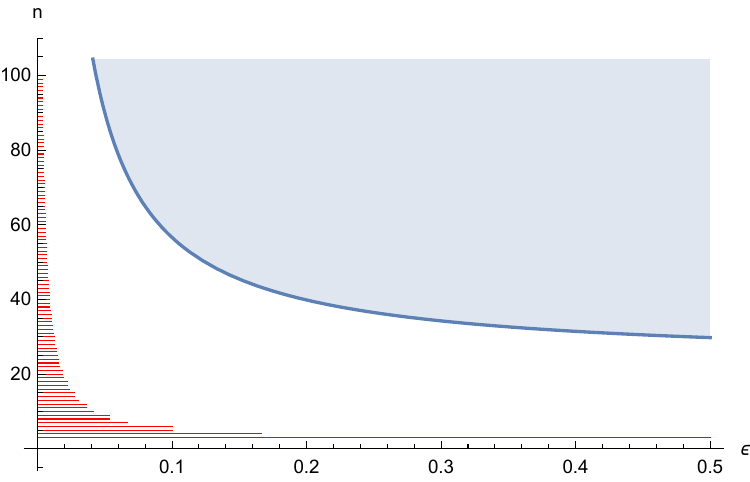}
\caption{Lower and upper bounds for the minimal $n$ such that the decomposition \eqref{eq:sum-of-2n-powers-R} holds for the shifted Motzkin polynomial \eqref{eq:shifted-Motzkin}, as functions of $\varepsilon$. The upper bound (filled blue region) comes from equation \eqref{eq:PsiBound-R}. The lower bounds (red horizontal bars, one for each value of $n$) come from requiring that all the coefficients of the polynomial $p_{n,\varepsilon}$ are non-negative.}
\label{fig:boundq-testW}
\end{center}
\end{figure}

\end{example}

\section{Application to exponential de Finetti theorems}\label{sec:exp-deF}

We show in this section how the so-called \emph{exponential de Finetti theorem} \cite{renner2007symmetry,koenig2009most} follows from the analysis of the inversion of the Chiribella identity from Lemma \ref{lem:inverse}. A similar derivation can be found in \cite[Theorem 8]{harrow2013church}; our result improves on this by having explicit constants in front of the maps, and thus achieving better error terms. The main idea here is that we want to approximate marginals of symmetric states not by states which are exactly tensor powers of pure states (as in the usual de Finetti theorem), but with states from the larger set
$$\mathcal W_{r} := \bigcup_{|\phi \rangle \in \mathbb C^d} \operatorname{span}\{P_{sym}^{(k)} |\phi\rangle^{\otimes k-r} \otimes |\psi \rangle \, : \, |\psi \rangle \in (\mathbb C^d)^{\otimes r}\}.$$
Such states are called \emph{$(k,r,d)$-almost product states}, and they form a larger set than the class of product states. Considering such a larger set of targets states allows for faster convergence in de Finetti-type result: one can go from linear to exponential convergence speed using this relaxation. De Finetti type theorems have found many applications in (quantum) information theory, mainly to reduce the analysis of protocols where symmetry plays an important role to that of the much simpler i.i.d.~protocols \cite{renner2007symmetry}. 

The main technical insight here is that almost product states obviously lie in the ranges of the maps $\operatorname{tr}^*_{k-s \to k} \circ \operatorname{MP}_{n \to k-s}$,
for all $0 \leq s \leq r$. This leads to the idea that one has to truncate the sum expression of the partial trace operator not only to the first term (which is the case for the Positivstellensatz in Theorem \ref{thm:PSS-C}), but to the $r$-th term.

\begin{theorem}
Let $1 \leq k <n$ be positive integers such that
$$\delta:=\frac{k(k+d-1)}{n+k+d-1}<\frac 1 3 \iff k < \frac{\sqrt{12n+(3d-2)^2}+4-3d}{6},$$
and let $\operatorname{\widetilde{MP}}$ denote the measure-and-prepare map, rescaled to be a quantum channel
$$\operatorname{\widetilde{MP}}_{n \to k}(X) = d[n] \int_{\mathcal S^{d-1}} \langle \varphi^{\otimes n} | X | \varphi^{\otimes n} \rangle \ketbra{\varphi}{\varphi}^{\otimes k} \mathrm{d}\varphi.$$
Then, for any $0 \leq r \leq k$, we have the following estimate in diamond norm
$$\left\| \operatorname{tr}_{n \to k} - \sum_{s=0}^r \hat q(n,k,k-s) \operatorname{Clone}_{k-s \to k} \circ \operatorname{\widetilde{MP}}_{n \to k-s} \right\|_\diamond \leq \varepsilon_r,$$
where the error is bounded by 
$$\varepsilon_r \leq \frac{\delta^{r+1}}{1-3\delta},$$
and where 
$$ \hat q(n,k,k-s) = q(n,k,k-s)\frac{d[k]}{d[k-s]},$$
with $q(n,k,t)$ as in Lemma \ref{lem:inverse}. In particular, the $k$-body marginal of a $n$-symmetric state is $\varepsilon_r$ away (in $1$-norm) from a linear combination of projections on $(k,r,d)$-almost product states.
\end{theorem}
\begin{proof}
The second claim follows from the first one and fact the range of the quantum channels $\operatorname{Clone}_{k-s \to k} \circ \operatorname{\widetilde{MP}}_{n \to k-s}$ is precisely the set of density matrices supported on the span of $\mathcal W_r$, for all $0 \leq s \leq r$. Indeed, this is a simple consequence of the form of the channels $\operatorname{Clone}_{k-s \to k}$ and $\operatorname{\widetilde{MP}}_{n \to k-s}$.

Regarding the main inequality, the starting point is the inverse of the Chiribella identity \eqref{eq:inverse-main-equality} proven in Lemma \ref{lem:inverse}:
\begin{align*}
\operatorname{tr}_{n \to k} &=  \sum_{s=0}^k q(n,k,k-s)\frac{d[n+k]d[k]}{d[n]d[k-s]} \operatorname{Clone}_{k-s \to k} \circ \operatorname{\widetilde{MP}}_{n \to k-s}\\
&= \sum_{s=0}^k \hat q(n,k,k-s) \operatorname{Clone}_{k-s \to k} \circ \operatorname{\widetilde{MP}}_{n \to k-s}.
\end{align*}
The inequality in the statement is obtained by bounding the diamond norm of the tail of the sum above using the triangle inequality and the fact that both $\operatorname{Clone}_{k-s \to k}$ and $\operatorname{\widetilde{MP}}_{n \to k-s}$ are quantum channels:
$$\varepsilon_r = \sum_{s=r+1}^k |\hat q(n,k,k-s)|.$$
The claimed bound on $\varepsilon_r$ follows from the geometric sum formula and the bound on the coefficients, which we show next. Compute first, using \eqref{eq:def-q},
$$\frac{|\hat q(n,k,k-s-1)|}{|\hat q(n,k,k-s)|} = \frac{k+d-1-s}{n+k+d-1-s} \cdot \frac{k-s}{s+1} \leq \frac{k(k+d-1)}{n+k+d-1} =: \delta,$$
where we have used the fact that both functions 
$$s \mapsto \frac{k+d-1-s}{n+k+d-1-s} \quad \text{ and } \quad s \mapsto \frac{k-s}{s+1}$$
are decreasing on $[0,r]$. The $s=0$ term is bounded as follows:
\begin{align*}
\frac{1}{\hat q(n,k,k)} &= \frac{(n)_k}{(n+k+d-1)_k} = \frac{n \cdots (n-k+1)}{(n+k+d-1) \cdots (n+d)}\\
&\geq \left( \frac{n-k+1}{n+d} \right)^k = \left( 1 - \frac{k+d-1}{n+d} \right)^k\\
&\geq 1 - \frac{k(k+d-1)}{n+d} = 1- \delta \left(1+\frac{k-1}{n+d} \right),
\end{align*}
while the general term satisfies
$$|\hat q(n,k,k-s)| \leq \frac{\delta^s}{1-\delta\left(1 + \frac{k-1}{n+d}\right)} \leq \frac{\delta^s}{1-2\delta}.$$
The total error is bounded by (we use $\delta < 1/3$)
$$\varepsilon_r \leq \sum_{s=r+1}^k \frac{\delta^s}{1-2\delta} < \sum_{s=r+1}^\infty \frac{\delta^s}{1-2\delta} =  \frac{\delta^{r+1}}{(1-\delta)(1-2\delta)} \leq \frac{\delta^{r+1}}{1-3\delta},$$
concluding the proof. 
\end{proof}

\begin{remark}
	Using the same type of exponential sum estimates as in the proof of Theorem \ref{thm:PSS-C}, one obtains the following bound on the error $\epsilon_r$, unconditionally on the values of the parameters $d$ and $1 \leq k \leq n$: 
	$$\epsilon_r \leq \frac{e^\delta\left(1 -\frac{\Gamma(r+1,\delta)}{r!}\right)}{\left(1 - \frac{k+d-1}{n+d}\right)^k},$$
	where $\Gamma$ is the upper incomplete Gamma function
	$$\Gamma(s,x) = \int_{x}^\infty t^{s-1}e^{-t} \mathrm{d}t.$$
\end{remark}

\begin{remark}
A similar result holds in the real case, where one has to replace the coefficients $q$ and $d[\cdot]$ by $q_\R$ and $d_\R[\cdot]$. The proof steps are identical, and one finds the following value for the base of the exponential
$$\delta_\R = \frac{k(2k+d-2)}{2n+2k+d-2}.$$
We leave the details of the calculation to the reader. 
\end{remark}

\bigskip
\textbf{Acknowledgments.} The authors would like to thank the Mathematisches Forschungsinstitut Oberwolfach (MFO) for providing a perfect working environment: this work was almost entirely done during a \emph{Research in Pairs} stay at the MFO. A.M-H acknowledges financial support from VILLUM FONDEN via the QMATH Centre of Excellence (Grant No. 10059) and from the European Union’s Horizon 2020 research and innovation programme under the Marie Sk\l odowska-Curie Action TIPTOP (grant no. 843414). I.N.'s research has been supported by the ANR projects {StoQ} {ANR-14-CE25-0003-01} and {NEXT} {ANR-10-LABX-0037-NEXT}. We thank Hamza Fawzi for pointing us to relevant literature. The authors would also like to thank Birgit Kraft for inspiring discussions.

\appendix

\section{Gaussian integrals and Hilbert identities}\label{sec:app-GI-HI}

In this appendix, we discuss Hilbert identities, that is expressions of powers of the real or complex Euclidean norms in a $d$-dimensional space as sums of powers of linear forms. These identities date back to Hilbert's work \cite{hilbert1909beweis} on Waring's problem \cite{ellison1971waring}. Our approach here is probabilistic, and largely inspired by \cite{harrow2013church}. We present separately the real and the complex cases. 

\subsection{The real case}\label{sec:app-HI-R}

\begin{definition}[Pairings]
A permutation $\pi \in S_{2n}$ is called a pairing iff there are transpositions with \emph{disjoint supports} $\tau_i$ for each $i\in \{ 1,\ldots ,n\}$ with $\pi = \tau_1\circ \cdots \circ \tau_n$. We denote by $\Pi[2n]$ the set of all pairings.
\end{definition}

\begin{lemma}[Combinatorial formula for Gaussian integrals -- Real case]
Let $\ket{\hat{\phi}}:\Omega\to \R^d$ be a vector-valued random variable where the entries $\hat{\phi}_i$ are i.i.d.~Gaussian distributed with mean $0$ and variance $1$. Then we have 
\[
\mathbb{E}[\ket{\hat{\phi}}^{\otimes 2n}] = \sum_{\pi\in \Pi[2n]} \ket{\pi},
\] 
where $\ket{\pi} = \bigotimes^n_{i=1} \ket{\Omega_{\tau_i}}$ for the transpositions $\tau_i$ with disjoint supports such that $\pi = \tau_1\circ \cdots \circ \tau_n$.
\label{lem:GaussianCombRe}
\end{lemma}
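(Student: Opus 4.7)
My plan is to reduce the statement to Isserlis' theorem (a.k.a.\ the Wick formula for Gaussian moments) and then identify the resulting combinatorial sum with the claimed sum over pairings in the tensor language.

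First I would expand the tensor equality component-wise. Fixing a standard basis $\{e_1,\ldots,e_d\}$ of $\R^d$, both sides of the claimed identity are vectors in $(\R^d)^{\otimes 2n}$, so it suffices to check equality of coefficients in front of $e_{i_1}\otimes\cdots\otimes e_{i_{2n}}$ for each multi-index $(i_1,\ldots,i_{2n})\in\{1,\ldots,d\}^{2n}$. The left-hand side contributes the mixed moment $\mathbb E[\hat\phi_{i_1}\hat\phi_{i_2}\cdots\hat\phi_{i_{2n}}]$, while the right-hand side contributes a sum of products of Kronecker deltas coming from the definition of $\ket{\Omega_d}=\sum_{j=1}^d e_j\otimes e_j$.

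Next I would invoke Isserlis' theorem for a centered Gaussian vector with covariance $\mathbb E[\hat\phi_a\hat\phi_b]=\delta_{a,b}$: for any $2n$ (not necessarily distinct) indices $i_1,\ldots,i_{2n}$,
\[
\mathbb E[\hat\phi_{i_1}\cdots\hat\phi_{i_{2n}}] = \sum_{\pi\in\Pi[2n]}\prod_{\{a,b\}\in\pi}\delta_{i_a,i_b},
\]
where the product runs over the $n$ pairs $\{a,b\}$ of the pairing $\pi$. This is the only analytic/probabilistic input needed, and it can either be cited directly or proved quickly by expanding the moment-generating function $\mathbb E[\exp(\langle t,\hat\phi\rangle)]=\exp(\tfrac12\|t\|^2)$ and reading off the coefficient of a monomial in $t$.

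Finally I would match this against the right-hand side. For a fixed pairing $\pi=\tau_1\circ\cdots\circ\tau_n$ with disjoint transpositions $\tau_i$ acting on positions $\{a_i,b_i\}$, the vector $\ket{\pi}=\bigotimes_{i=1}^n\ket{\Omega_{\tau_i}}$ is (after reordering tensor factors according to $\pi$) equal to $\sum_{j_1,\ldots,j_n=1}^d \bigotimes_{i=1}^n(e_{j_i}\text{ at position }a_i)\otimes(e_{j_i}\text{ at position }b_i)$, so its coefficient in front of $e_{i_1}\otimes\cdots\otimes e_{i_{2n}}$ is exactly $\prod_{i=1}^n \delta_{i_{a_i},i_{b_i}}$. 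Summing over $\pi\in\Pi[2n]$ reproduces the Isserlis sum, which finishes the proof.

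The only non-routine step is verifying Isserlis' theorem, but this is standard; the remaining work is purely a bookkeeping identification between a pairing $\pi$ and the tensor $\ket{\pi}$, which presents no real obstacle beyond careful indexing.
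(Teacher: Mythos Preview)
Your proof is correct and follows essentially the same approach as the paper: both expand the tensor identity in the computational basis, apply Wick's/Isserlis' formula to convert the Gaussian moment into a sum over pairings of products of covariances $\delta_{i_r,i_s}$, and then identify this with the coefficients of $\sum_{\pi\in\Pi[2n]}\ket{\pi}$.
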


\begin{proof}
We find that in the computational basis
\[
\bra{i_1i_2\ldots i_{2n}}\mathbb{E}[\ket{\hat{\phi}}^{\otimes 2n}] = \mathbb{E}[\hat{\phi}_{i_1}\ldots \hat{\phi}_{i_{2n}}] = \sum_{\pi\in \Pi[2n]}\prod_{(r,s)\in \pi} \mathbb{E}[\hat{\phi}_{i_r}\hat{\phi}_{i_s}] 
\]
where we used Wick's (or Isserlis') formula \cite{isserlis1918formula} in the last step and we write $(r,s)\in \pi$ for the transpositions $\tau_i = (r,s)$ such that $\pi = \tau_1\circ \cdots \circ \tau_n$. By our assumptions on the $\hat{\phi}$ the above expression simplifies to
\[
\bra{i_1i_2\ldots i_{2n}}\mathbb{E}[\ket{\hat{\phi}}^{\otimes 2n}] = \sum_{\pi\in \Pi[2n]}\prod_{(r,s)\in \pi} \delta_{i_r i_s} = \bra{i_1i_2\ldots i_{2n}}\sum_{\pi\in \Pi[2n]}\ket{\pi}.
\]

\end{proof}

In the following let $\mathcal S^{d-1}_\R\subset \R^d$ denote the real unit sphere. 

\begin{lemma}[Real Spherical Hilbert identity -- Linear algebra form]
We have
\[
\frac{1}{|\Pi[2n]|}\sum_{\pi\in \Pi[2n]} \ket{\pi} = d_\R[n] \int_{S^{d-1}_\R} \ket{v}^{\otimes 2n} \text{d}v,
\]
with 
\[
d_\R[n] = \frac{2^{2n}n!~\Gamma(n+d/2)}{(2n)!~\Gamma(d/2)}.
\]
\label{lem:SphHilbRealLA}
\end{lemma}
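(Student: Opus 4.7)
The plan is to derive the identity by exploiting the polar decomposition of the standard Gaussian random vector $|\hat\phi\rangle$ used in Lemma \ref{lem:GaussianCombRe}, together with the fact that the radial part is independent of the angular part (which is uniform on $\mathcal S^{d-1}_\R$).

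First I would write $|\hat\phi\rangle = r |v\rangle$ with $r = \|\hat\phi\|_2 \geq 0$ and $|v\rangle = |\hat\phi\rangle/\|\hat\phi\|_2 \in \mathcal S^{d-1}_\R$. Since the distribution of $|\hat\phi\rangle$ is rotationally invariant, $r$ and $|v\rangle$ are independent, $|v\rangle$ is distributed according to the uniform probability measure $\mathrm dv$ on $\mathcal S^{d-1}_\R$, and $r^2$ follows a $\chi^2$-distribution with $d$ degrees of freedom. Taking expectations of $|\hat\phi\rangle^{\otimes 2n} = r^{2n} |v\rangle^{\otimes 2n}$ and using independence gives
\[
\mathbb E\bigl[|\hat\phi\rangle^{\otimes 2n}\bigr] = \mathbb E[r^{2n}] \int_{\mathcal S^{d-1}_\R} |v\rangle^{\otimes 2n}\,\mathrm dv.
\]

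Next I would compute the two scalar factors. The $(2n)$-th moment of the $\chi$-distribution is the standard integral
\[
\mathbb E[r^{2n}] = \frac{1}{(2\pi)^{d/2}}\int_{\R^d} \|x\|^{2n} e^{-\|x\|^2/2}\,\mathrm dx = \frac{2^n\,\Gamma(n+d/2)}{\Gamma(d/2)},
\]
and the number of pairings is the well-known count
\[
|\Pi[2n]| = \frac{(2n)!}{2^n\,n!}.
\]

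Combining Lemma \ref{lem:GaussianCombRe} with the display above yields
\[
\sum_{\pi\in\Pi[2n]} |\pi\rangle = \frac{2^n\,\Gamma(n+d/2)}{\Gamma(d/2)} \int_{\mathcal S^{d-1}_\R} |v\rangle^{\otimes 2n}\,\mathrm dv,
\]
and dividing through by $|\Pi[2n]|$ gives the coefficient
\[
\frac{2^n\,n!}{(2n)!}\cdot\frac{2^n\,\Gamma(n+d/2)}{\Gamma(d/2)} = \frac{2^{2n}\,n!\,\Gamma(n+d/2)}{(2n)!\,\Gamma(d/2)} = d_\R[n],
\]
which is exactly the identity claimed. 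There is no genuine obstacle here: the only non-trivial ingredient is Lemma \ref{lem:GaussianCombRe}, and everything else reduces to the elementary facts that a standard Gaussian vector has rotationally invariant density $\propto e^{-\|x\|^2/2}$, so its direction is Haar-uniform on the sphere, and the bookkeeping of a Gamma integral together with the count of perfect matchings of $2n$ elements.
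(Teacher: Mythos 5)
Your proof is correct and follows essentially the same strategy as the paper: apply Lemma \ref{lem:GaussianCombRe}, decompose the Gaussian vector in polar form $|\hat\phi\rangle = r|v\rangle$, exploit independence of the radial and angular parts, and then plug in the $\chi$-moment $\mathbb E[r^{2n}] = 2^n\,\Gamma(n+d/2)/\Gamma(d/2)$ and the pairing count $|\Pi[2n]| = (2n)!/(2^n n!)$. The only cosmetic difference is that the paper first reduces to the equivalent ``polynomial form'' (Lemma \ref{lem:SphHilbReal}) by pairing both sides against $|x\rangle^{\otimes 2n}$ and invoking that $\{|x\rangle^{\otimes 2n} : x\in\R^d\}$ spans $\vee^{2n}\R^d$, whereas you carry out the computation directly at the tensor level; this is a mild simplification that removes an inessential detour.
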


We shall prove the lemma in its polynomial form. Therefore, note that $\{ \ket{x}^{\otimes n}~:~ x\in \R^d\}$ spans $\vee^n\R^d$, and that the following lemma is equivalent to the previous. 

\begin{lemma}[Real Spherical Hilbert identity -- Polynomial form]
For any $x\in \R^d$ we have 
\[
\|x\|^{2n} = d_\R[n] \int_{\|v\|=1} (\braket{x}{v})^{2n} \text{d}v,
\]
with 
\[
d_\R[n] = \frac{2^{2n}n!~\Gamma(n+d/2)}{(2n)!~\Gamma(d/2)}.
\]
\label{lem:SphHilbReal}
\end{lemma}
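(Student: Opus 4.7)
The plan is to evaluate the $2n$-th moment $\mathbb{E}\bigl[(\braket{x}{\hat\phi})^{2n}\bigr]$ in two different ways, using the standard Gaussian vector $\hat\phi$ from Lemma~\ref{lem:GaussianCombRe}, and then equate the results. First, since $\braket{x}{\hat\phi}$ is a real centered Gaussian of variance $\|x\|^2$, Wick's formula (equivalently, a specialization of Lemma~\ref{lem:GaussianCombRe} to this one-dimensional projection) gives
\[
\mathbb{E}\bigl[(\braket{x}{\hat\phi})^{2n}\bigr] = (2n-1)!!\,\|x\|^{2n} = \frac{(2n)!}{2^n n!}\,\|x\|^{2n}.
\]

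Second, I would pass to polar coordinates on $\R^d$, writing $\hat\phi = r\hat v$ with $r := \|\hat\phi\|$ and $\hat v := \hat\phi/r$. A standard change-of-variables argument shows that $\hat v$ is distributed according to the uniform probability measure on $\mathcal S^{d-1}_\R$ and is independent of $r$, while $r^2$ is $\chi^2_d$-distributed with moments $\mathbb{E}[r^{2n}] = 2^n\Gamma(n+d/2)/\Gamma(d/2)$. Consequently,
\[
\mathbb{E}\bigl[(\braket{x}{\hat\phi})^{2n}\bigr] \;=\; \mathbb{E}[r^{2n}]\cdot \int_{\mathcal S^{d-1}_\R}(\braket{x}{v})^{2n}\,\mathrm{d}v.
\]

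Equating the two expressions and solving for the spherical integral then yields the claimed identity, with the prefactor being a routine reassembly of $(2n)!/(2^n n!)$ and $2^n\Gamma(n+d/2)/\Gamma(d/2)$ into $d_\R[n]=2^{2n}n!\,\Gamma(n+d/2)/\bigl((2n)!\,\Gamma(d/2)\bigr)$. The only non-trivial ingredients are thus the polar decomposition of the standard Gaussian measure (rotational invariance of the density plus independence of the radial and angular parts) and the $\chi^2_d$ moment formula, both of which are classical; I do not expect any genuine obstacle. A fully equivalent route, which I would mention as an alternative, is to first invoke rotational invariance to reduce to $x=\|x\|e_1$ and then compute $\int_{\mathcal S^{d-1}_\R} v_1^{2n}\,\mathrm{d}v$ directly via a Beta integral. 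In either approach the sole point requiring care is the normalization convention: the measure $\mathrm{d}v$ must be the uniform \emph{probability} measure on $\mathcal S^{d-1}_\R$, which is precisely what matches the stated value of $d_\R[n]$.
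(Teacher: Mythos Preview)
Your proposal is correct and follows essentially the same route as the paper: compute $\mathbb{E}[(\braket{x}{\hat\phi})^{2n}]$ once via Wick's formula (the paper phrases this through Lemma~\ref{lem:GaussianCombRe} and the pairing count $|\Pi[2n]|=(2n)!/(2^n n!)$, which is the same content) and once via the polar decomposition of the Gaussian, then equate. Your observation that $\mathrm{d}v$ must be the uniform probability measure for the stated constant to be correct is exactly right and is implicit in the paper's bookkeeping of the surface-area and Gaussian normalization factors.
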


\begin{proof}
Using Lemma \ref{lem:GaussianCombRe} we find that 
\[
\|x\|^{2n} = \braket{x^{\otimes 2n}}{\Omega_d^{\otimes n}} = \frac{1}{|\Pi[2n]|} \sum_{\pi \in \Pi[2n]} \braket{x^{\otimes 2n}}{\pi}= \frac{1}{|\Pi[2n]|}\mathbb{E}[(\braket{x}{\hat{\phi}})^{2n}].
\]
To evaluate the expectation value in the last expression we shall use polar coordinates. For this we decompose $\ket{\hat{\phi}} = r\ket{v}$ with radial part 
\[
r = \sqrt{\braket{\hat{\phi}}{\hat{\phi}}}
\]
and spherical part $v:\Omega\to S^{d-1}$ distributed uniformly on the unit sphere. Then it is easy to see that we get
\[
\|x\|^{2n} = \frac{1}{|\Pi[2n]|}\mathbb{E}[r^{2n}]\frac{2\pi^{d/2}}{\Gamma(d/2)}\int_{\|v\|=1} (\braket{x}{v})^{2n} \text{d}v.
\]
Since $r$ is given as the length of the Gaussian vector $\ket{\hat{\phi}}$ we find that
\[
\mathbb{E}[r^{2n}] = \frac{1}{(2\pi)^{d/2}}\int^\infty_{0} r^{2n+d-1} e^{-\frac{1}{2}r^2} \text{d}r = \frac{2^{n}}{2\pi^{d/2}}\Gamma(n+d/2).
\] 
Finally this gives
\[
\|x\|^{2n} = \frac{2^{2n}n!~\Gamma(n+d/2)}{(2n)!~\Gamma(d/2)}\int_{\|v\|=1} (\braket{x}{v})^{2n} \text{d}v,
\]
where we used that $|\Pi[2n]| = (2n-1)(2n-3) \cdots 5 \cdot 3 \cdot 1 = \frac{(2n)!}{n! 2^n}$.

\end{proof}

\subsection{The complex case} \label{sec:app-HI-C}

Analogue to the real case we can derive Hilbert identities from Gaussian integrals. This is well-known within the community of quantum information theory (see \cite{harrow2013church}).

For any permutation $\sigma\in \mathfrak S_n$ we denote by $P_\sigma\in \mathcal{U}((\C^d)^{\otimes n})$ unitary operators defined by 
\[
P_\sigma\ket{i_1i_2\ldots i_n} = \ket{i_{\sigma(1)}i_{\sigma(2)}\ldots i_{\sigma(n)}},
\]
on the computational basis states. The projector onto the symmetric subspace $\vee^n \C^d\subset (\C^d)^{\otimes n}$ is given by
\[
P^n_\text{sym} = \frac{1}{n!} \sum_{\sigma\in \mathfrak S_n} P_\sigma.
\]

\begin{lemma}[Combinatorial formula for Gaussian integrals -- Complex case]
Let $\ket{\hat{\phi}}:\Omega\to \C^d$ be a vector-valued random variable where the entries $\hat{\phi}_i$ are iid complex Gaussian distributed with mean $0$ and variance $\mathbbm{E}(|\hat{\phi}_i|^2) = 1$. Then we have 
\[
\mathbb{E}[\ketbra{\hat{\phi}}{\hat{\phi}}^{\otimes n}] = \sum_{\sigma\in \mathfrak S_n} P_{\sigma} = n! P^n_\text{sym}.
\] 
\label{lem:GaussianCombComp}
\end{lemma}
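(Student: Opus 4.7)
The plan is to mirror the real-case argument (Lemma \ref{lem:GaussianCombRe}) but using the complex version of Wick's formula, which is the mechanism that replaces the sum over pairings by a sum over permutations. First I would compute the matrix elements of $\mathbb{E}[\ketbra{\hat\phi}{\hat\phi}^{\otimes n}]$ in the computational basis of $(\C^d)^{\otimes n}\otimes(\C^d)^{\otimes n}$, obtaining
\[
\bra{j_1\cdots j_n}\mathbb{E}[\ketbra{\hat\phi}{\hat\phi}^{\otimes n}]\ket{i_1\cdots i_n} = \mathbb{E}\!\left[\prod_{k=1}^n \hat\phi_{j_k}\overline{\hat\phi_{i_k}}\right].
\]

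Next I would invoke the complex Wick (Isserlis) formula. Because the entries are i.i.d.\ circularly symmetric complex Gaussians, one has $\mathbb{E}[\hat\phi_a\hat\phi_b]=0$ and $\mathbb{E}[\overline{\hat\phi_a}\,\overline{\hat\phi_b}]=0$, so in the Wick expansion only those pairings contribute that pair each unconjugated factor with a conjugated one. Such pairings are in bijection with permutations $\sigma\in\mathfrak S_n$ (matching $\hat\phi_{j_k}$ with $\overline{\hat\phi_{i_{\sigma(k)}}}$), and each surviving term equals $\prod_k \mathbb{E}[\hat\phi_{j_k}\overline{\hat\phi_{i_{\sigma(k)}}}] = \prod_k \delta_{j_k,i_{\sigma(k)}}$. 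Summing over $\sigma$ gives
\[
\bra{j_1\cdots j_n}\mathbb{E}[\ketbra{\hat\phi}{\hat\phi}^{\otimes n}]\ket{i_1\cdots i_n} = \sum_{\sigma\in\mathfrak S_n}\prod_{k=1}^n \delta_{j_k,i_{\sigma(k)}}.
\]

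Finally I would recognize the right-hand side as the $(\bra{j_1\cdots j_n},\ket{i_1\cdots i_n})$ matrix element of $\sum_{\sigma\in\mathfrak S_n}P_\sigma$, using the definition $P_\sigma\ket{i_1\cdots i_n}=\ket{i_{\sigma(1)}\cdots i_{\sigma(n)}}$ from the paragraph preceding the lemma. Since the equality of matrix elements holds for all basis vectors, it extends to equality of operators, and the identification with $n!\,P^n_\text{sym}$ is then immediate from the formula $P^n_\text{sym}=\tfrac1{n!}\sum_\sigma P_\sigma$ recalled in the same paragraph.

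The only real obstacle is to make sure the complex Wick formula is applied with the correct convention: one must verify that $\mathbb{E}[\hat\phi_a\hat\phi_b]=\mathbb{E}[\overline{\hat\phi_a}\,\overline{\hat\phi_b}]=0$ under the stated normalization $\mathbb{E}[|\hat\phi_i|^2]=1$ (which follows from circular symmetry if one reads the hypothesis as complex Gaussian, i.e., $\hat\phi_i=(X_i+iY_i)/\sqrt2$ with $X_i,Y_i$ i.i.d.\ real $\mathcal N(0,1)$), and that the surviving matchings really are indexed precisely by $\mathfrak S_n$ (and not by some larger or smaller combinatorial set). Once this bookkeeping is pinned down, the rest of the argument is routine and strictly parallel to the real case.
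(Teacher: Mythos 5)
Your proposal is correct, and the underlying idea (compute matrix elements, apply Wick's theorem, match surviving pairings to permutations, recognize the $P_\sigma$ matrix elements) is the same as in the paper. The one genuine point of divergence is that you invoke the \emph{complex} Wick formula directly as a known result, whereas the paper does not: it decomposes each entry as $\hat\phi_k = Z^{(0)}_k + iZ^{(1)}_k$, expands the expectation into a sum over sign patterns $t_1,\ldots,t_{2n}\in\{0,1\}$, applies the \emph{real} Wick (Isserlis) formula to the $2n$ real Gaussians $Z^{(t_r)}_{i_r}$, and then shows by a cancellation argument (the alternating phases $i^{t_1}\cdots i^{t_n}(-i)^{t_{n+1}}\cdots(-i)^{t_{2n}}$ kill all pairings that match two unconjugated or two conjugated factors) that only the ``transversal'' pairings survive, which are then re-indexed by $\sigma\in\mathfrak S_n$. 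In effect the paper re-derives the complex Wick formula inline to stay self-contained and parallel to its own Lemma \ref{lem:GaussianCombRe}, while you take it as given, yielding a shorter argument. Both are correct; your version trades self-containment for brevity. You are also right that the hypothesis $\mathbb{E}[|\hat\phi_i|^2]=1$ should be read as circularly symmetric complex Gaussian with $\mathrm{Re}\,\hat\phi_i,\mathrm{Im}\,\hat\phi_i$ i.i.d.\ $\mathcal N(0,1/2)$, which is exactly what makes $\mathbb{E}[\hat\phi_a\hat\phi_b]=0$ hold and the cancellation (or the complex Wick formula) go through; the paper uses the same convention, as is implicit in its pairing argument and made explicit in the normalization remark after Lemma \ref{lem:SphHilbComp}.
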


\begin{proof}
We find that in the computational basis
\[
\bra{i_1i_2\ldots i_{n}}\mathbb{E}[\ketbra{\hat{\phi}}{\hat{\phi}}^{\otimes n}]\ket{i_{n+1}i_{n+2}\ldots i_{2n}} = \mathbb{E}[\hat{\phi}_{i_1}\ldots \hat{\phi}_{i_{n}}\overline{\hat{\phi}_{i_{n+1}}}\ldots \overline{\hat{\phi}_{i_{2n}}}]. 
\]
Denoting $\phi_k = Z^{(0)}_k + iZ^{(1)}_k$ and expanding the previous expectation value we find
\begin{align*}
\mathbb{E}[\hat{\phi}_{i_1}\ldots &\hat{\phi}_{i_{n}}\overline{\hat{\phi}_{i_{n+1}}}\ldots \overline{\hat{\phi}_{i_{2n}}}] = \sum_{t_1,\ldots, t_{2n}\in \{ 0,1\}} i^{t_1}\cdots i^{t_n}(-i)^{t_{n+1}}\cdots (-i)^{t_{2n}} \mathbb{E}[Z^{(t_1)}_{i_1}\cdots Z^{(t_{2n})}_{i_{2n}}] \\
&= \sum_{t_1,\ldots, t_{2n}\in \{ 0,1\}} i^{t_1}\cdots i^{t_n}(-i)^{t_{n+1}}\cdots (-i)^{t_{2n}} \sum_{\pi\in \Pi[2n]}\prod_{(r,s)\in \pi} \mathbb{E}[Z^{(t_r)}_{i_r} Z^{(t_{s})}_{i_{s}}] \\
& = \sum_{\pi\in \Pi[2n]}\quad\lbr\sum_{t_1,\ldots, t_{2n}\in \{ 0,1\}} i^{t_1}\cdots i^{t_n}(-i)^{t_{n+1}}\cdots (-i)^{t_{2n}} \prod_{(r,s)\in \pi} \delta_{t_r t_s}\mathbb{E}[Z^{(t_r)}_{i_r} Z^{(t_{s})}_{i_{s}}]\rbr
\end{align*}
where we used Wick's formula \cite{isserlis1918formula} in the second equality and we write $(r,s)\in \pi$ for the transpositions $\tau_i = (r,s)$ such that $\pi = \tau_1\circ \cdots \circ \tau_n$. In the last step we used that real and imaginary parts of the $\hat{\phi}_{i}$ are independent (Gaussian) random variables of mean $0$. Now observe that only $\pi\in \Pi[2n]$ pairing elements from $\{ 1,\ldots n\}$ with elements from $\{ n+1 ,\ldots 2n\}$ contribute to the above sum. For all other pairings the second sum (i.e.~the expression in the square brackets) vanishes due to cancellations. Therefore we have  
\begin{align*}
&\sum_{\pi\in \Pi[2n]}\quad \sum_{t_1,\ldots, t_{2n}\in \{ 0,1\}} i^{t_1}\cdots i^{t_n}(-i)^{t_{n+1}}\cdots (-i)^{t_{2n}} \prod_{(r,s)\in \pi} \delta_{t_r t_s}\mathbb{E}[Z^{(t_r)}_{i_r} Z^{(t_{s})}_{i_{s}}] \\
&= \sum_{\sigma\in \mathfrak S_n}\quad\sum_{t_1,\ldots, t_{n}\in \{ 0,1\}} \prod^n_{k=1} \mathbb{E}[Z^{(t_k)}_{i_k} Z^{(t_k)}_{i_{n+\sigma(k)}}]\\
& = \sum_{\sigma\in \mathfrak S_n}\prod^n_{k=1} \delta_{i_k i_{n+\sigma(k)}}
\end{align*}
where we replaced $\pi\in \Pi[2n]$ pairing elements from $\{ 1,\ldots n\}$ with elements from $\{ n+1 ,\ldots 2n\}$ by permutations $\sigma\in \mathfrak S_n$ by setting $\sigma(r)=s$ whenever $(r,n+s)\in \pi$ for $r\in \{ 1,\ldots n\}$. Finally, note that 
\[
\bra{i_1i_2\ldots i_{n}}\mathbb{E}[\ketbra{\hat{\phi}}{\hat{\phi}}^{\otimes n}]\ket{i_{n+1}i_{n+2}\ldots i_{2n}} = \sum_{\sigma\in \mathfrak S_n}\prod^n_{k=1} \delta_{i_k i_{n+\sigma(k)}} = \sum_{\sigma\in \mathfrak S_n}\bra{i_1i_2\ldots i_{n}}P_\sigma\ket{i_{n+1}i_{n+2}\ldots i_{2n}},
\]
which finishes the proof.
\end{proof}

In the following let $\mathcal S^{d-1} \subseteq \C^d$ denote the complex unit sphere. 

\begin{lemma}[Complex Spherical Hilbert identity--Linear algebra form]
We have
\[
P^{(n)}_{sym} = d[n] \int_{\mathcal S^{d-1}} \ketbra{\phi}{\phi}^{\otimes n} \mathrm{d}\phi,
\]
where $d[n] = \dim \vee^n\C^d =  \binom{n+d-1}{n}$.
\label{lem:SphHilbCompLA}
\end{lemma}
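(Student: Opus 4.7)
The plan is to mirror the strategy used in the real case (Lemma \ref{lem:SphHilbReal}/\ref{lem:SphHilbRealLA}), starting from the combinatorial Gaussian identity (Lemma \ref{lem:GaussianCombComp}) and passing to polar coordinates on $\mathbb C^d$. First I would rewrite Lemma \ref{lem:GaussianCombComp} as
\[
n!\,P^{(n)}_{\text{sym}} = \mathbb{E}\!\left[\ketbra{\hat{\phi}}{\hat{\phi}}^{\otimes n}\right],
\]
where $\hat{\phi}$ is a standard complex Gaussian vector on $\C^d$ with $\mathbb E|\hat\phi_i|^2=1$.

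Next I would use the fact that the distribution of $\hat\phi$ is unitarily invariant. Consequently, the decomposition $\hat{\phi}=r\,\phi$, with $r:=\|\hat{\phi}\|_2$ and $\phi:=\hat{\phi}/\|\hat{\phi}\|_2$, gives two \emph{independent} random variables, where $\phi$ is distributed uniformly on the complex unit sphere $\mathcal S^{d-1}$ with respect to the Haar measure $\mathrm{d}\phi$. Since $\ketbra{\hat{\phi}}{\hat{\phi}}^{\otimes n}=r^{2n}\,\ketbra{\phi}{\phi}^{\otimes n}$, independence yields
\[
n!\,P^{(n)}_{\text{sym}} = \mathbb{E}\!\left[r^{2n}\right]\cdot\int_{\mathcal S^{d-1}} \ketbra{\phi}{\phi}^{\otimes n}\,\mathrm{d}\phi.
\]

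The remaining task is to compute the radial moment $\mathbb{E}[r^{2n}]$. Since $|\hat\phi_i|^2$ is a sum of squares of two independent real Gaussians of variance $1/2$, each $|\hat\phi_i|^2$ is Exp$(1)$-distributed, so $r^2=\sum_{i=1}^d |\hat\phi_i|^2$ follows a Gamma$(d,1)$ law. A direct computation then gives
\[
\mathbb{E}[r^{2n}] \;=\; \frac{1}{\Gamma(d)}\int_0^\infty s^{n+d-1}e^{-s}\,\mathrm{d}s \;=\; \frac{\Gamma(n+d)}{\Gamma(d)} \;=\; \frac{(n+d-1)!}{(d-1)!}.
\]
Dividing through by $n!$ produces the binomial coefficient
\[
\frac{\mathbb{E}[r^{2n}]}{n!} \;=\; \binom{n+d-1}{n} \;=\; d[n],
\]
which is precisely the dimension of $\vee^n\C^d$. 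Substituting this into the identity above concludes the proof.

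No step is really an obstacle: the unitary invariance of the complex Gaussian makes the polar decomposition and the independence of radial and angular parts immediate, and the moment computation reduces to the Gamma function. The only mild subtlety is to fix the normalization convention for the complex Gaussian (so that $\mathbb E|\hat\phi_i|^2=1$, matching Lemma \ref{lem:GaussianCombComp}), which ensures that the individual $|\hat\phi_i|^2$ are unit-mean exponentials; any other convention would give the same final result up to a rescaling of $r$ that cancels in the product $\mathbb E[r^{2n}]\,\int\ketbra{\phi}{\phi}^{\otimes n}\mathrm{d}\phi$.
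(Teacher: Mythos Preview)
Your proof is correct and follows essentially the same approach as the paper: both start from the Gaussian identity of Lemma \ref{lem:GaussianCombComp}, pass to polar coordinates, and compute the radial moment $\mathbb{E}[r^{2n}]=(n+d-1)!/(d-1)!$. The only cosmetic difference is that the paper first sandwiches with $\bra{x^{\otimes n}}\cdot\ket{x^{\otimes n}}$ to reduce to the scalar (polynomial) identity of Lemma \ref{lem:SphHilbComp} and then invokes that $\{x^{\otimes n}\}$ spans $\vee^n\C^d$, whereas you stay at the operator level throughout; your recognition of $r^2$ as Gamma$(d,1)$ is also a slightly slicker way to get the same moment the paper computes by a direct radial integral.
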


Again we shall prove the lemma in its polynomial form. Therefore, note that $\{ \ket{x}^{\otimes n}~:~ x\in \C^d\}$ spans $\vee^n\C^d$, and that the following lemma is equivalent to the previous.

\begin{lemma}[Complex Spherical Hilbert identity--Polynomial form]
For any $x\in \C^d$ we have 
\[
\|x\|^{2n} = d[n] \int_{\|v\|=1} |\braket{x}{v}|^{2n} \text{d}v.
\]

\label{lem:SphHilbComp}
\end{lemma}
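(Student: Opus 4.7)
The plan is to mirror the real-case derivation of Lemma~\ref{lem:SphHilbReal}, replacing the real Wick formula by its complex analogue from Lemma~\ref{lem:GaussianCombComp} and using the appropriate complex-Gaussian polar decomposition.

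First I would sandwich the identity of Lemma~\ref{lem:GaussianCombComp} between $\bra{x^{\otimes n}}$ and $\ket{x^{\otimes n}}$. Since $\ket{x^{\otimes n}} \in \vee^n \C^d$ is already in the symmetric subspace, $P^{(n)}_{sym}$ acts as the identity on it, so $\bra{x^{\otimes n}}P^{(n)}_{sym}\ket{x^{\otimes n}} = \|x\|^{2n}$. On the other hand, $\bra{x^{\otimes n}}\ketbra{\hat\phi}{\hat\phi}^{\otimes n}\ket{x^{\otimes n}} = |\braket{x}{\hat\phi}|^{2n}$, hence
\[
n!\,\|x\|^{2n} = \mathbb{E}\bigl[|\braket{x}{\hat\phi}|^{2n}\bigr].
\]

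Next I would introduce complex polar coordinates $\hat\phi = r v$ with radial part $r = \|\hat\phi\|$ and angular part $v = \hat\phi/r$ on the complex unit sphere $\mathcal S^{d-1}$. By the rotational invariance of the complex Gaussian law, $v$ is distributed uniformly on $\mathcal S^{d-1}$ and is independent of $r$; moreover $r^2 = \sum_{i=1}^d |\hat\phi_i|^2$ is a sum of $d$ i.i.d.~standard exponentials, so $r^2 \sim \mathrm{Gamma}(d,1)$. Factoring the expectation, one obtains
\[
\mathbb{E}\bigl[|\braket{x}{\hat\phi}|^{2n}\bigr] = \mathbb{E}[r^{2n}] \int_{\|v\|=1} |\braket{x}{v}|^{2n}\,\mathrm d v,
\]
where $\mathrm d v$ denotes the normalized uniform probability measure on $\mathcal S^{d-1}$. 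A routine Gamma-function computation (substitute $u=r^2$ in the appropriate integral, or directly read off the moments of the Gamma law) gives $\mathbb{E}[r^{2n}] = \Gamma(n+d)/\Gamma(d)$.

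Putting the three identities together would yield
\[
\|x\|^{2n} = \frac{1}{n!}\cdot\frac{\Gamma(n+d)}{\Gamma(d)} \int_{\|v\|=1}|\braket{x}{v}|^{2n}\,\mathrm d v = \binom{n+d-1}{n}\int_{\|v\|=1}|\braket{x}{v}|^{2n}\,\mathrm d v,
\]
which matches the claim since $d[n]=\binom{n+d-1}{n}$. The only slightly delicate point is bookkeeping of normalizations (probability measure on $\mathcal S^{d-1}$ versus surface measure on $S^{2d-1}\subset\R^{2d}$, and the factor $\pi^{-d}$ in the complex Gaussian density), but once the radial density $\frac{2}{\Gamma(d)}r^{2d-1}e^{-r^2}$ is correctly identified the rest is mechanical. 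The combinatorial input via Lemma~\ref{lem:GaussianCombComp} is the nontrivial step; everything else is a Gamma-function calculation.
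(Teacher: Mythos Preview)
Your proposal is correct and follows essentially the same route as the paper: sandwich Lemma~\ref{lem:GaussianCombComp} between $\bra{x^{\otimes n}}$ and $\ket{x^{\otimes n}}$, then split the Gaussian expectation into a radial moment times a spherical integral. Your identification of $r^2\sim\mathrm{Gamma}(d,1)$ to read off $\mathbb E[r^{2n}]=\Gamma(n+d)/\Gamma(d)$ is in fact a bit tidier than the paper's explicit integral computation, but the argument is the same in substance.
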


\begin{proof}
Using Lemma \ref{lem:GaussianCombComp} we find that 
\[
\|x\|^{2n} = \braket{x^{\otimes n}}{x^{\otimes n}} = \bra{x^{\otimes n}}P^{(n)}_\text{sym}\ket{x^{\otimes n}}= \frac{1}{n!}\mathbb{E}[|\braket{x}{\hat{\phi}}|^{2n}].
\]
To evaluate the expectation value in the last expression we use again polar coordinates. For this we decompose $\ket{\hat{\phi}} = r\ket{v}$ with radial part 
\[
r = \sqrt{\braket{\hat{\phi}}{\hat{\phi}}}
\]
and spherical part $v:\Omega\to S^{d-1}$ distributed uniformly on the (complex) unit sphere. Then it is easy to see that we get
\[
\|x\|^n = \frac{1}{n!}\mathbb{E}[r^{2n}]\frac{2\pi^d}{(d-1)!}\int_{\|v\|=1} \braket{x}{v}^{2n} \text{d}v.
\]
Since $r$ is given as the length of the Gaussian vector $\ket{\hat{\phi}}$ we find that 
\[
\mathbb{E}[r^{2n}] = \frac{1}{2^n(2\pi)^{d}}\int^\infty_{0} r^{2n+2d-1} e^{-\frac{1}{2}r^2} \text{d}r = \frac{1}{2\pi^{d}}\Gamma(n+d) = \frac{1}{2\pi^{d}}(n+d-1)!.
\] 
Here the additional factor $1/2^n$ compared to the real case comes from the normalization of the Gaussian random variables appearing in the real and imaginary parts of the entries of $\ket{\hat{\phi}}$.
Finally this gives
\[
\|x\|^n = \frac{(n+d-1)!}{n!(d-1)!}\int_{\|v\|=1} \braket{x}{v}^{2n} \text{d}v.
\]

\end{proof}

\section{Simple complex spherical designs}\label{spherical-design-appendix}

In this appendix we present a simple method to construct complex spherical designs that is inspired and closely related to a method by Hausdorff \cite{hausdorff1909hilbertschen} for the real case. We do not claim, that this method is new. In fact similar methods can be found in the literature \cite{nesterenko2006waring,scott2006tight}, but we have not found a truly elementary account in the complex case. Let us start with the definition of a complex spherical design (note that in the literature, the objects introduced below are sometimes called \emph{weighted} complex spherical designs).

\begin{definition}[Complex spherical design]
	For $N,d,n\in\N$ a complex spherical $n$-design of order $N$ on $\C^d$ is a set of vectors $\{\ket{\gamma_i}\}^{N}_{i=1}\subset \C^d$ and a set of weights $\{ p_i\}^N_{i=1}\subset \R^+$ such that
	\begin{equation}
	P^{(n)}_{\text{sym}} = \sum^N_{i=1} p_i \ketbra{\gamma_i}{\gamma_i}^{\otimes n}. 
	\label{equ:condSphDe}
	\end{equation}
	\label{def:complSphDe}
\end{definition}

For our construction we need the well-known family of orthogonal Laguerre polynomials defined as
\[
L_m(x) = \frac{e^x}{m!}\frac{d^m}{dx^m}( e^{-x}x^m),
\]  
for any $m\in\N$. The following theorem summarizes some well-known properties of these polynomials and we refer to \cite{szeg1939orthogonal} for more details.

\begin{theorem}[Properties of Laguerre polynomials]\hfill
\begin{enumerate}
\item For any $m\in\N$ the Laguerre polynomial $L_m$ is of degree $m$ and has $m$ distinct zeros in $(0,\infty)$. 
\item For any $m\in\N$ we have 
\[
\int^\infty_0 L_m(x)p(x)e^{-x} dx = 0
\]
for any polyomial $p(x)$ of degree $\text{deg}(p)\leq m-1$.
\end{enumerate}
\label{thm:propLaguerre}
\end{theorem}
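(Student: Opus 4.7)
The plan is to prove part (2) first, since the orthogonality property is the main engine, and then deduce part (1) from it via a classical sign-change argument for orthogonal polynomials.

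For part (2), I would integrate by parts $m$ times. Unpacking the definition,
\[
\int_0^\infty L_m(x) p(x) e^{-x}\, dx = \frac{1}{m!} \int_0^\infty \frac{d^m}{dx^m}\bigl(e^{-x} x^m\bigr) p(x)\, dx.
\]
The key observation is that for each $0 \leq j < m$, Leibniz's rule gives
\[
\frac{d^j}{dx^j}\bigl(e^{-x} x^m\bigr) = e^{-x}\sum_{k=0}^{j} \binom{j}{k}(-1)^k \frac{m!}{(m-j+k)!} x^{m-j+k},
\]
so each such derivative has a factor of $x^{m-j} \geq x$ (vanishing at $x=0$) and decays exponentially as $x\to\infty$. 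Therefore all boundary terms in the repeated integration by parts vanish, and we arrive at
\[
\int_0^\infty L_m(x) p(x) e^{-x}\, dx = \frac{(-1)^m}{m!} \int_0^\infty e^{-x} x^m\, p^{(m)}(x)\, dx = 0,
\]
since $p^{(m)} \equiv 0$ whenever $\deg p \leq m-1$.

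For part (1), the Leibniz expansion above (specialised to $j=m$) shows that $L_m(x) = \sum_{k=0}^m \frac{(-1)^k m!}{(m-k)!\,(k!)^2} x^k$, which has degree exactly $m$ with leading coefficient $(-1)^m/m!$. To count real zeros in $(0,\infty)$, I would use the standard sign-change trick for orthogonal polynomials: let $x_1 < \cdots < x_r$ be the points in $(0,\infty)$ at which $L_m$ changes sign and set $q(x) = \prod_{i=1}^r (x - x_i)$. By construction the product $L_m(x) q(x)$ has constant sign on $(0,\infty)$ and is not identically zero, so $\int_0^\infty L_m(x) q(x) e^{-x}\, dx \neq 0$. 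If $r < m$, then $\deg q \leq m-1$, and part (2) would force this integral to vanish, a contradiction. Hence $r \geq m$, and since $\deg L_m = m$ this accounts for all roots, which must therefore be simple and lie in $(0,\infty)$.

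The only delicate point in the whole argument is the verification that \emph{all} boundary terms in the iterated integration by parts vanish at $x=0$; this is a routine but easy-to-get-wrong combinatorial check using Leibniz's rule, which is why I would address it explicitly as above before iterating. Nothing beyond Rolle/orthogonality and the explicit Rodrigues-type formula defining $L_m$ is needed.
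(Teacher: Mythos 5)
The paper does not actually prove this theorem: it is presented as a summary of classical facts about Laguerre polynomials, with Szeg\H{o}'s monograph \cite{szeg1939orthogonal} cited for the details. Your argument is correct and is the standard textbook proof. You establish the orthogonality relation (part 2) by $m$ integrations by parts applied to the Rodrigues-type formula, and you correctly identify and discharge the one delicate point, namely that the Leibniz expansion of $\frac{d^j}{dx^j}\bigl(e^{-x}x^m\bigr)$ has $x^{m-j}$ as its lowest-order factor, so every boundary term vanishes at $x=0$ when $j<m$ (with exponential decay taking care of $x=\infty$). Part (1) then follows by the classical sign-change argument: testing $L_m$ against the monic polynomial $q$ built from its sign-change points in $(0,\infty)$ shows that having fewer than $m$ such points would contradict part (2), and a degree count then forces exactly $m$ simple roots, all positive. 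Your write-up therefore supplies a self-contained proof of a result the paper chooses to leave to the reference; the only stylistic note is that your order of presentation (part 2 before part 1) is the right one, since the root-location claim genuinely depends on the orthogonality.
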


Note that condition \eqref{equ:condSphDe} from Definition \ref{def:complSphDe} is equivalent to the polynomial identity
\begin{equation}
( |y_1|^2 + \ldots + |y_d|^2)^n = \sum^N_{i=1} p_i ( \gamma_i(1) \overline{y_1} + \ldots + \gamma_i(d) \overline{y_d})^n ( \overline{\gamma_i(1)} y_1 + \ldots + \overline{\gamma_i(d)} y_d)^n
\label{equ:complexHilbert}
\end{equation}
for any $d$-tuple of complex numbers $y_1,\ldots ,y_d\in\C$. Here $\gamma_i(k)$ denotes the $k$th entry of the vector $\ket{\gamma_i}\in\C^d$ from Definition \ref{def:complSphDe} in the computational basis. Note that this identity is the complex analogue of a Hilbert identity. To find coefficients $\gamma_i(j)\in\C$ and weights $p_i\in\R^+$ satisfying this identity we need the following lemma.

\begin{lemma}[Moment problem]
For any $m\in\N$ there exist weights $w_j\geq 0$ and $\alpha_j\in\C$ for any $j\in\{ 1,\ldots ,m^2\}$ such that for $k,l\in\{ 0,\ldots , m-1\}$ we have 
\[
\sum^{m^2}_{j=1} w_j \overline{\alpha_j}^k\alpha^l_j = \begin{cases} k!, \text{ for }k=l\\
0, \text{ else.}\end{cases}
\]
\label{lem:Moment}
\end{lemma}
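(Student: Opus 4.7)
The strategy is to separate radial and angular parts using polar coordinates. I will look for nodes in a product form $\alpha_{s,t} = r_s \omega^{t}$, where $\omega = e^{2\pi i / m}$ is a primitive $m$-th root of unity, with $s \in \{1,\ldots,m\}$ and $t \in \{1,\ldots,m\}$, yielding $m^2$ points in total. The weights will have the same product structure: $w_{s,t} = v_s/m$ for non-negative numbers $v_s$ to be determined. For $k,l \in \{0,\ldots,m-1\}$ one then computes
\[
\sum_{s,t} w_{s,t}\, \overline{\alpha_{s,t}}^k \alpha_{s,t}^l = \sum_{s=1}^m \frac{v_s}{m} r_s^{k+l} \sum_{t=1}^m \omega^{t(l-k)}.
\]
Since $l - k \in \{-(m-1),\ldots, m-1\}$, the inner sum over roots of unity evaluates to $m \delta_{k,l}$, reducing the problem to the real moment problem: find non-negative weights $v_s$ and positive nodes $r_s$ such that
\[
\sum_{s=1}^m v_s\, (r_s^2)^k = k! \qquad \text{for all } k=0,1,\ldots,m-1.
\]

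The key step is solving this real moment problem, which I would do via the $m$-point Gaussian quadrature rule associated with the weight $e^{-x}\,\mathrm dx$ on $[0,\infty)$. The moments of this measure are $\int_0^\infty x^k e^{-x}\,\mathrm dx = k!$, and the associated orthogonal polynomials are precisely the Laguerre polynomials $L_m$. I would set $r_s := \sqrt{x_s}$ where $x_1,\ldots,x_m$ are the $m$ distinct positive zeros of $L_m$ (Theorem \ref{thm:propLaguerre}(1)), and take $v_s$ to be the Christoffel weights of this quadrature rule, defined so that $\sum_s v_s p(x_s) = \int_0^\infty p(x) e^{-x}\,\mathrm dx$ holds for all polynomials $p$ of degree at most $2m-1$. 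Applying this to $p(x) = x^k$ for $k \leq m-1$ yields the required moment identity. The positivity of the $v_s$ is the standard property of Gaussian quadrature, and it can be derived directly from the orthogonality in Theorem \ref{thm:propLaguerre}(2): fixing $s$, apply the exactness of the rule to the polynomial $\bigl(L_m(x)/(x-x_s)\bigr)^2$, which has degree $2(m-1)$ and is non-negative on $[0,\infty)$, isolating $v_s > 0$.

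The main subtlety, which I would just verify in one line, is the range of $l-k$: we need $\omega^{t(l-k)}$ to sum to $0$ for all non-zero differences $l-k$ in $\{-(m-1),\ldots,m-1\}$; this holds because such differences are never multiples of $m$. With both the angular cancellation and the Gaussian quadrature in hand, the claim follows by combining $\alpha_{s,t} := r_s \omega^{t}$ and $w_{s,t} := v_s/m$, relabelled by a single index $j \in \{1,\ldots,m^2\}$. I do not anticipate a real obstacle; the only thing to be careful about is invoking the positivity of the Gaussian quadrature weights, for which the properties of $L_m$ listed in Theorem \ref{thm:propLaguerre} are exactly what is needed.
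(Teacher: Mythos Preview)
Your proposal is correct and follows essentially the same route as the paper: both take $\alpha_{s,t}=\sqrt{x_s}\,e^{2\pi i t/m}$ with $x_s$ the zeros of $L_m$ and weights $w_{s,t}=v_s/m$, use the root-of-unity sum to reduce to the real moment problem $\sum_s v_s x_s^k=k!$, and then verify positivity of the $v_s$ by applying the quadrature rule to $(L_m(x)/(x-x_s))^2$. The only cosmetic difference is that the paper first solves the Vandermonde system and then upgrades exactness to degree $2m-1$ via orthogonality of $L_m$, whereas you invoke the Gauss--Laguerre quadrature directly; the content is identical.
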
 
\begin{proof}
Fix $m\in\N$. For $s\in\{ 1,\ldots ,m\}$ let $\beta_s\in (0,\infty)$ denote the zeros of the Laguerre polynomial $L_m$. According to Theorem \ref{thm:propLaguerre} these are distinct positive numbers. Now consider for any $s,t\in\{ 1,\ldots ,m\}$ the complex numbers
\[
\alpha_{st} = \sqrt{\beta_s}e^{\frac{2\pi i}{m}t}.
\] 
For arbitrary real numbers $w_{st} = w_s/m$ and any $k,l\in\{ 0,\ldots ,m-1\}$ we compute 
\[
\sum^m_{s=1}\sum^m_{t=1} w_{st}\overline{\alpha_{st}}^k\alpha_{st}^l = \begin{cases} \sum^m_{s=1}w_s \beta^{k}_s, \text{ for }k=l \\
0, \text{ else.}\end{cases} 
\]
Now note that the set of equations 
\[
\sum^m_{s=1}w_s \beta^{k}_s = k!, \text{ for }k\in\{ 0,\ldots ,m-1\}
\]
in unknowns $w_s$ is a Vandermonde system. As the $\beta_s$ are distinct by definition (as zeros of $L_m$) it has a unique solution $w_s\in\R$ for $s\in\{ 1,\ldots ,m\}$. It remains to show that $w_s\geq 0$.

Note that by the above for any polynomial $q(x) = \sum^{m-1}_{k=0}a_k x^k$ of $\text{deg}( q)\leq m-1$ we have 
\begin{equation}
\int^{\infty}_0 q(x)e^{-x}dx = \sum^{m-1}_{k=0} a_k k! = \sum^{m-1}_{k=0} a_k \sum^{m}_{s=1}w_s \beta^k_s = \sum^m_{s=1} w_s q(\beta_s)
\label{equ:quadq}
\end{equation}
where we used the elementary integral
\[
\int^\infty_0 x^k e^{-x} dx = k!
\]
for any $k\in\N$. By polynomial long division we can write any polynomial $Q(x)$ of degree $\text{deg}( Q)\leq 2m -1$ as
\[
Q(x) = L_{m}(x)g_{m-1}(x) + q(x).
\]
for polynomials $g_{m-1}(x), q(x)$ of degrees $\text{deg}(g_{m-1})\leq m-1$ and $\text{deg}(q)\leq m-1$. Now observe that 
\begin{equation}
\int^\infty_0 Q(x)e^{-x} dx = \int^\infty_0 q(x)e^{-x} dx = \sum^m_{s=1} w_s q(\beta_s) = \sum^m_{s=1} w_s Q(\beta_s)
\label{equ:quadQ}
\end{equation}
where we used that 
\[
\int^\infty_0 L_{m}(x)g_{m-1}(x)e^{-x} dx = 0
\]
by Theorem \ref{thm:propLaguerre} for the first equality, \eqref{equ:quadq} for the second equality, and for the third equality that by definition $L_m(\beta_s) = 0$ for all $s\in\{ 1,\ldots ,m\}$. For any $l\in\{ 1,\ldots ,m\}$ consider the polynomial 
\[
Q_l(x) = \lb \frac{L_m(x)}{x-\beta_l}\rb^2.
\]
Note that $\text{deg}(Q_l)= 2m-2$ and that $Q_l(x)\geq 0$ for any $x\in\R$. Using \eqref{equ:quadQ} it follows that 
\[
w_l (L'_m(\beta_l))^2 = \sum^m_{s=1} w_s Q_l(\beta_s) = \int^\infty_0 Q_l(x)e^{-x} dx \geq 0.
\] 
As $(L'_m(\beta_l))^2 >0$ (since $L_m$ has no degenerate zeros) it follows that $w_l\geq 0$ for any $l\in\{ 1,\ldots ,m\}$.
\end{proof}

Using the previous lemma we can explicitly construct a complex spherical design:

\begin{theorem}[Complex spherical design]
For any $n,d\in\N$ set $N = (n+1)^{2d}$. Then there exist a set of vectors $\{\ket{\gamma_i}\}^{N}_{i=1}\subset \C^d$ and a set of weights $\{ p_i\}^N_{i=1}\subset \R^+$ satisfying \eqref{equ:condSphDe}. 
\end{theorem}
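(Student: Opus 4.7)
The plan is to build the design as a product construction from the single-variable moment quadrature produced by Lemma \ref{lem:Moment}, reading off the polynomial identity \eqref{equ:complexHilbert} coordinate by coordinate. First I would apply Lemma \ref{lem:Moment} with $m=n+1$ to obtain weights $w_j\geq 0$ and points $\alpha_j\in\C$ for $j\in\{1,\dots,(n+1)^2\}$ with
\[
\sum_{j=1}^{(n+1)^2} w_j\,\overline{\alpha_j}^{\,k}\alpha_j^{l} \;=\; k!\,\delta_{kl}\qquad(k,l\in\{0,\dots,n\}).
\]
Then, for each multi-index $\vec{\jmath}=(j_1,\dots,j_d)\in\{1,\dots,(n+1)^2\}^d$ (so $N=(n+1)^{2d}$ in total), I would define
\[
\ket{\gamma_{\vec{\jmath}}}\;:=\;(\alpha_{j_1},\alpha_{j_2},\dots,\alpha_{j_d})^T\in\C^d,\qquad p_{\vec{\jmath}}\;:=\;\frac{1}{n!}\prod_{k=1}^d w_{j_k}\;\geq\;0.
\]

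Next I would verify the polynomial form \eqref{equ:complexHilbert} of the spherical design condition. Expanding both sides by the multinomial theorem in the complex variables $y_1,\dots,y_d$ and their conjugates, the left-hand side becomes
\[
(|y_1|^2+\cdots+|y_d|^2)^n = \sum_{|\vec k|=n}\binom{n}{\vec k}\prod_{k=1}^d|y_k|^{2k_k},
\]
while on the right-hand side each term contributes, for multi-indices $\vec k,\vec l$ with $|\vec k|=|\vec l|=n$, the coefficient
\[
\binom{n}{\vec k}\binom{n}{\vec l}\sum_{\vec{\jmath}} p_{\vec{\jmath}}\prod_{k=1}^d \alpha_{j_k}^{k_k}\overline{\alpha_{j_k}}^{l_k}.
\]
Since the sum over $\vec{\jmath}$ factorises as a product over the coordinates $k=1,\dots,d$, the moment identity applied in each factor (noting $k_k,l_k\leq n=m-1$) yields
\[
\sum_{\vec{\jmath}}p_{\vec{\jmath}}\prod_{k=1}^d \alpha_{j_k}^{k_k}\overline{\alpha_{j_k}}^{l_k}
= \frac{1}{n!}\prod_{k=1}^d k_k!\,\delta_{k_k,l_k}
= \frac{1}{n!}\Bigl(\prod_{k=1}^d k_k!\Bigr)\,\delta_{\vec k,\vec l}.
\]
Substituting this, the right-hand side reduces to $\sum_{|\vec k|=n}\frac{\binom{n}{\vec k}^2\prod_k k_k!}{n!}\prod_k|y_k|^{2k_k}=\sum_{|\vec k|=n}\binom{n}{\vec k}\prod_k|y_k|^{2k_k}$, matching the left-hand side.

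The equivalence between the polynomial identity \eqref{equ:complexHilbert} and the operator identity \eqref{equ:condSphDe} (used implicitly just before the statement of the theorem) follows because $\{\ket{x}^{\otimes n}:x\in\C^d\}$ spans $\vee^n\C^d$, just as in Lemma \ref{lem:SphHilbComp}; applying $\bra{y^{\otimes n}}\cdot\ket{y^{\otimes n}}$ to both sides of \eqref{equ:condSphDe} produces \eqref{equ:complexHilbert}, and since this holds for all $y\in\C^d$, the operator identity is recovered by polarisation. I do not anticipate a real obstacle: the only nontrivial input is Lemma \ref{lem:Moment}, and the product structure of the construction turns \eqref{equ:complexHilbert} into a coordinate-wise application of the one-variable moment condition. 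The most delicate bookkeeping step is arranging the normalisation constant $1/n!$ so that $\binom{n}{\vec k}^2 \prod_k k_k! / n! = \binom{n}{\vec k}$ for every $\vec k$ with $|\vec k|=n$, which is guaranteed by the identity $\binom{n}{\vec k}\prod_k k_k!=n!$.
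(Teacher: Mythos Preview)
Your proposal is correct and follows essentially the same approach as the paper's proof: apply Lemma~\ref{lem:Moment} with $m=n+1$, form the product construction over the $d$ coordinates with the same choices $\ket{\gamma_{\vec{\jmath}}}=(\alpha_{j_1},\dots,\alpha_{j_d})^T$ and $p_{\vec{\jmath}}=\frac{1}{n!}\prod_k w_{j_k}$, and verify \eqref{equ:complexHilbert} by a multinomial expansion that factorises across coordinates. The only cosmetic point is your overloaded use of $k$ both as a coordinate index and inside the multi-index (so that expressions like $k_k$ appear); this does not affect correctness but would benefit from relabelling.
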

\begin{proof}
Consider the weights $w_j\geq 0$ and complex numbers $\alpha_j\in\C$ for any $j\in\{ 1,\ldots , (m+1)^2\}$ constructed in Lemma \ref{lem:Moment} such that 
\[
\sum^{m^2}_{j=1} w_j \overline{\alpha_j}^k\alpha^l_j = \begin{cases} k!, \text{ for }k=l\\
0, \text{ else.}\end{cases}
\]
holds for all $k,l\in\{ 1,\ldots ,m\}$. Now we set
\begin{align*}
p_{i_1\ldots i_d} &= \frac{1}{n!}w_{i_1}\cdots w_{i_d} \geq 0 \\
\ket{\gamma_{i_1\cdots i_d}} &= (\alpha_{i_1},\ldots ,\alpha_{i_d})^T \in \C^{d}
\end{align*}
for any $d$-tuple $i_1,\ldots ,i_d\in\{ 1,\ldots , m+1\}$. Finally we can compute
\begin{align*}
\sum_{i_1,i_2,\ldots,i_d} &p_{i_1\cdots i_d} ( \gamma_{i_1\cdots i_d}(1) \overline{y_1} + \ldots + \gamma_{i_1\cdots i_d}(d) \overline{y_d})^n ( \overline{\gamma_{i_1\cdots i_d}(1)} y_1 + \ldots + \overline{\gamma_{i_1\cdots i_d}(d)} y_d)^n \\
&=\frac{1}{n!}\sum_{i_1,i_2,\ldots,i_d} w_{i_1}\cdots w_{i_d} ( \alpha_{i_1}(1) \overline{y_1} + \ldots + \alpha_{i_d}(d) \overline{y_d})^n ( \overline{\alpha_{i_1}(1)} y_1 + \ldots + \overline{\alpha_{i_d}(d)} y_d)^n \\
&=\sum_{i_1,i_2,\ldots,i_d} w_{i_1}\cdots w_{i_d}\sum_{{\substack{k_1+\cdots k_d = n \\ l_1+\cdots l_d = n}}} \frac{n!}{k_1!l_1!\cdots k_d!l_d!} \overline{\alpha_{i_1}}^{k_1}\alpha^{l_1}_{i_1}\ldots\overline{\alpha_{i_d}}^{k_d}\alpha^{l_d}_{i_d}y_1^{k_1}\overline{y_1}^{l_1}\cdots y_d^{k_d}\overline{y_d}^{l_d}  \\
& = \sum_{{\substack{k_1+\cdots k_d = n \\ l_1+\cdots l_d = n}}} \frac{n!}{l_1!k_1!\cdots l_d!k_d!} ( \sum_{i_1} w_{i_1}\overline{\alpha_{i_1}}^{k_1}\alpha^{l_1}_{i_1})\cdots ( \sum_{i_d} w_{i_d}\overline{\alpha_{i_d}}^{k_d}\alpha^{l_d}_{i_d}) y_1^{k_1}\overline{y_1}^{l_1}\cdots y_d^{k_d}\overline{y_d}^{l_d} \\
&= \sum_{k_1+\cdots k_d=n} \frac{n!}{k_1!\cdots k_d!} |y_1|^{2k_1}\cdots |y_d|^{2k_d}\\
& = ( |y_1|^2 + \cdots |y_d|^2)^n. 
\end{align*}
The previous computation verifies \eqref{equ:complexHilbert} for the vectors and weights constructed previously. Therefore, \eqref{equ:condSphDe} holds. 
\end{proof}

\bibliographystyle{quantum}
\bibliography{rag-qit}

\begin{thebibliography}{10}

\bibitem{bochnak2013real}
Jacek Bochnak, Michel Coste, and Marie-Françoise Roy.
\newblock ``Real algebraic geometry''.
\newblock \href{https://dx.doi.org/10.1007/978-3-662-03718-8}{Volume~36}.
\newblock Springer Science and Business Media. ~(2013).

\bibitem{marshall2008positive}
Murray Marshall.
\newblock ``Positive polynomials and sums of squares''.
\newblock \href{https://dx.doi.org/10.1090/surv/146}{Volume 146}.
\newblock American Mathematical Soc. ~(2008).

\bibitem{krivine1964anneaux}
Jean-Louis Krivine.
\newblock ``Anneaux préordonnés''.
\newblock \href{https://dx.doi.org/10.1007/bf02807438}{Journal d'analyse
  mathématique {\bf 12}, 307--326}~(1964).

\bibitem{stengle1974nullstellensatz}
Gilbert Stengle.
\newblock ``A nullstellensatz and a positivstellensatz in semialgebraic
  geometry''.
\newblock \href{https://dx.doi.org/10.1007/bf01362149}{Mathematische Annalen
  {\bf 207}, 87--97}~(1974).

\bibitem{schmuedgen1991positivstellensatz}
Konrad Schmüdgen.
\newblock ``The k-moment problem for compact semi-algebraic sets''.
\newblock \href{https://dx.doi.org/10.1007/BF01446568}{Mathematische Annalen
  {\bf 289}, 203--206}~(1991).

\bibitem{putinar1993positive}
Mihai Putinar.
\newblock ``Positive polynomials on compact semi-algebraic sets''.
\newblock \href{https://dx.doi.org/10.1512/iumj.1993.42.42045}{Indiana
  University Mathematics Journal {\bf 42}, 969--984}~(1993).

\bibitem{artin1927zerlegung}
Emil Artin.
\newblock ``Über die zerlegung definiter funktionen in quadrate''.
\newblock In Abhandlungen aus dem mathematischen Seminar der Universität
  Hamburg.
\newblock \href{https://dx.doi.org/10.1007/bf02952513}{Volume~5, pages
  100--115}.
\newblock SpringerSpringer Science and Business Media LLC~(1927).

\bibitem{reznick1995uniform}
Bruce Reznick.
\newblock ``Uniform denominators in hilbert's seventeenth problem''.
\newblock \href{https://dx.doi.org/10.1007/BF02572604}{Math. Z. {\bf 220},
  75--97}~(1995).

\bibitem{to2006effective}
Wing-Keung To and Sai-Kee Yeung.
\newblock ``Effective isometric embeddings for certain hermitian holomorphic
  line bundles''.
\newblock \href{https://dx.doi.org/10.1112/S0024610706022708}{Journal of the
  London Mathematical Society {\bf 73}, 607--624}~(2006).

\bibitem{harrow2013church}
Aram~W Harrow.
\newblock ``The church of the symmetric subspace''~(2013).

\bibitem{chiribella2010quantum}
Giulio Chiribella.
\newblock ``On quantum estimation, quantum cloning and finite quantum de
  finetti theorems''.
\newblock In Wim van Dam, Vivien~M. Kendon, and Simone Severini, editors,
  Conference on Quantum Computation, Communication, and Cryptography.
\newblock \href{https://dx.doi.org/10.1007/978-3-642-18073-6_2}{Volume 6519,
  pages 9--25}.
\newblock SpringerSpringer Berlin Heidelberg~(2010).

\bibitem{fang2020sum}
Kun Fang and Hamza Fawzi.
\newblock ``The sum-of-squares hierarchy on the sphere and applications in
  quantum information theory''.
\newblock \href{https://dx.doi.org/10.1007/s10107-020-01537-7}{Mathematical
  Programming {\bf 190}, 1--30}~(2020).

\bibitem{faybusovich2004global}
Leonid Faybusovich.
\newblock ``Global optimization of homogeneous polynomials on the simplex and
  on the sphere''.
\newblock In Frontiers in global optimization.
\newblock \href{https://dx.doi.org/10.1007/978-1-4613-0251-3_6}{Pages
  109--121}.
\newblock Springer~(2004).

\bibitem{doherty2012convergence}
Andrew~C Doherty and Stephanie Wehner.
\newblock ``Convergence of sdp hierarchies for polynomial optimization on the
  hypersphere''~(2012).
\newblock  url:~\url{http://arxiv.org/abs/1210.5048}.

\bibitem{bhatia1997matrix}
Rajendra Bhatia.
\newblock ``Matrix analysis''.
\newblock \href{https://dx.doi.org/10.1007/978-1-4612-0653-8}{Volume 169}.
\newblock Springer Science and Business Media. ~(1997).

\bibitem{husimi1940some}
Kôdi Husimi.
\newblock ``Some formal properties of the density matrix''.
\newblock \href{https://dx.doi.org/10.11429/ppmsj1919.22.4_264}{Proceedings of
  the Physico-Mathematical Society of Japan. 3rd Series {\bf 22},
  264--314}~(1940).

\bibitem{delsarte1991spherical}
Philippe Delsarte, Jean-Marie Goethals, and Johan~Jacob Seidel.
\newblock ``Spherical codes and designs''.
\newblock In Geometry and Combinatorics.
\newblock
  \href{https://dx.doi.org/10.1016/B978-0-12-189420-7.50013-X}{Volume~6, pages
  68--93}.
\newblock Elsevier~(1991).

\bibitem{scott2006tight}
Andrew~J Scott.
\newblock ``Tight informationally complete quantum measurements''.
\newblock \href{https://dx.doi.org/10.1088/0305-4470/39/43/009}{Journal of
  Physics A: Mathematical and General {\bf 39}, 13507}~(2006).

\bibitem{wilde2017quantum}
Mark~M Wilde.
\newblock ``Quantum information theory''.
\newblock \href{https://dx.doi.org/10.1017/cbo9781139525343}{Cambridge
  University Press}. ~(2017).

\bibitem{keyl1999optimal}
Michael Keyl and Reinhard~F Werner.
\newblock ``Optimal cloning of pure states, testing single clones''.
\newblock \href{https://dx.doi.org/10.1063/1.532887}{Journal of Mathematical
  Physics {\bf 40}, 3283--3299}~(1999).

\bibitem{graham1989concrete}
Ronald~L Graham, Donald~E Knuth, Oren Patashnik, and Stanley Liu.
\newblock ``Concrete mathematics: a foundation for computer science''.
\newblock \href{https://dx.doi.org/10.1063/1.4822863}{Computers in Physics {\bf
  3}, 106--107}~(1989).

\bibitem{hobson1931theory}
Ernest~W Hobson.
\newblock ``The theory of spherical and ellipsoidal harmonics''.
\newblock \href{https://dx.doi.org/10.2307/3607762}{Volume~40, page 599}.
\newblock CUP Archive. ~(1931).

\bibitem{quillen1968representation}
Daniel~G Quillen.
\newblock ``On the representation of hermitian forms as sums of squares''.
\newblock \href{https://dx.doi.org/10.1007/bf01389773}{Inventiones mathematicae
  {\bf 5}, 237--242}~(1968).

\bibitem{nam2008effective}
Mok~Hoi Nam.
\newblock ``Effective aspects of positive semi-definite real and complex
  polynomials''.
\newblock Master's thesis.
\newblock Department of Mathematics, National University of Singapore.
\newblock ~(2008).

\bibitem{renner2007symmetry}
Renato Renner.
\newblock ``Symmetry of large physical systems implies independence of
  subsystems''.
\newblock \href{https://dx.doi.org/10.1038/nphys684}{Nature Physics {\bf 3},
  645}~(2007).

\bibitem{koenig2009most}
Robert König and Graeme Mitchison.
\newblock ``A most compendious and facile quantum de finetti theorem''.
\newblock \href{https://dx.doi.org/10.1063/1.3049751}{Journal of Mathematical
  Physics {\bf 50}, 012105}~(2009).

\bibitem{hilbert1909beweis}
David Hilbert.
\newblock ``Beweis für die darstellbarkeit der ganzen zahlen durch eine feste
  anzahl n-ter potenzen (waringsches problem)''.
\newblock \href{https://dx.doi.org/10.1007/978-3-642-50831-8_11}{Mathematische
  Annalen {\bf 67}, 281--300}~(1909).

\bibitem{ellison1971waring}
William~J Ellison.
\newblock ``Waring's problem''.
\newblock \href{https://dx.doi.org/10.1080/00029890.1971.11992689}{The American
  Mathematical Monthly {\bf 78}, 10--36}~(1971).

\bibitem{isserlis1918formula}
Leon Isserlis.
\newblock ``On a formula for the product-moment coefficient of any order of a
  normal frequency distribution in any number of variables''.
\newblock \href{https://dx.doi.org/10.1093/biomet/12.1-2.134}{Biometrika {\bf
  12}, 134--139}~(1918).

\bibitem{hausdorff1909hilbertschen}
Felix Hausdorff.
\newblock ``Zur hilbertschen lösung des waringschen problems''.
\newblock \href{https://dx.doi.org/10.1007/bf01450406}{Mathematische Annalen
  {\bf 67}, 301--305}~(1909).

\bibitem{nesterenko2006waring}
Yu~V Nesterenko.
\newblock ``On waring's problem (elementary methods)''.
\newblock \href{https://dx.doi.org/10.1007/s10958-006-0266-8}{Journal of
  Mathematical Sciences {\bf 137}, 4699--4715}~(2006).

\bibitem{szeg1939orthogonal}
Gabor Szego.
\newblock ``Orthogonal polynomials''.
\newblock \href{https://dx.doi.org/10.2307/1989699}{Volume~23}.
\newblock American Mathematical Soc. ~(1939).

\end{thebibliography}

\end{document}